\def\Rp{\Re^+\cup\{0\}}
\def\R{\mathbb{R}}
\def\Rp{\R^+\cup\{0\}}
\newtheorem{theorem}{Theorem}
\newtheorem{lemma}[theorem]{Lemma}
\newcommand{\junk}[1]{}
\newcommand{\keywords}[1]{\medskip\par\noindent\textbf{Keywords: }#1}
\title{On the Complexity of Nash Dynamics and Sink equilibria}
\author{
{Vahab S. Mirrokni}
\thanks{Theory Group, Microsoft Research,
E-Mail: mirrokni@microsoft.com. }
 \and
{Alexander Skopalik}
\thanks{Dept.\ of Computer Science, RWTH Aachen,
E-Mail: skopalik@cs.rwth-aachen.de.  Research supported in part
by the German-Israeli Foundation.}}
\date{}
\newcommand{\ignore}[1]{}
\begin{document}

\maketitle

\maketitle
\thispagestyle{empty}

\quad\bigskip

\begin{abstract}
Studying Nash dynamics is an important approach for analyzing the outcome of games with 
repeated selfish behavior of self-interested agents.
Sink equilibria has been introduced by Goemans, Mirrokni, and Vetta 
for studying social cost on Nash dynamics 
over pure strategies in games. However, 
they do not address the complexity of  sink equilibria in these games.  
Recently,  Fabrikant and Papadimitriou initiated the study of 
the complexity of Nash dynamics in two classes of games.
In order to completely understand
the complexity of Nash dynamics in a variety of games, 
we study the following three questions for various games:
(i) given a state in game, can we verify if this state is in a sink equilibrium
or not?
(ii) given an instance of a
game, can we verify if there exists any sink equilibrium other than pure 
Nash equilibria? and 
(iii) given an instance of a game, can we verify if there exists
a pure Nash equilibrium (i.e, a sink equilibrium with one state)? 

In this paper, we almost answer all of the above questions for a variety of  
classes of games with succinct representation, including  
anonymous games, player-specific and weighted congestion games,
valid-utility games, and two-sided market games.
In particular,  for most of these problems, we show that
(i) it is PSPACE-complete to verify if a given state is in a 
sink equilibrium, 
(ii) it is NP-hard to verify if there exists a pure Nash equilibrium 
in the game or not,
(iii) it is PSPACE-complete to verify if there exists
any sink equilibrium other than pure Nash equilibria.
To solve these problems, we illustrate general techniques that  
could be used to answer similar questions in other 
classes of games. 

\end{abstract}

\keywords{Nash equilibria, potential games, sink equilibria}
\vfill

\newpage
\setcounter{page}{1}

\section{Introduction}
A standard approach in studying the outcome of a system involving 
self-interested behavior of agents is to investigate the 
Nash dynamics of the corresponding games. In 
Nash dynamics, agents repeatedly respond to the current state 
of the game by playing a best-response strategy. Studying such dynamics
is very important for understanding the behavior of a system 
throughout time, 
and the outcome of the game after many repeated game play. 
Similar to the recent efforts in studying the complexity of game 
theoretic concepts such as mixed Nash equilibria~\cite{DGP06,CD06}, 
and pure NE~\cite{FPT04,SV07}, studying the complexity of Nash dynamics 
can help us better understand the outcome of a game.

In an attempt to study such dynamics for 
pure strategies, Goemans, Mirrokni,
and Vetta~\cite{GMV05} introduced the concept of sink equilibria in games: 
sink equilibria are strongly connected components of a strategy profile graph  
associated with the game with no outgoing 
edges. Equivalently, 
sink equilibria characterize all states for which the
probability of reaching that state after  a sufficiently large random 
best-response sequence is nonzero. Also any random best-response sequence
will converge to a sink equilibrium with probability one. Moreover,
sink equilibria generalize pure Nash equilibria in that a 
pure Nash equilibrium is a single-state sink equilibrium of the game.


Goemans et al.~\cite{GMV05} studied sink equilibria for their social cost
in two classes of games. However, they did not consider 
the complexity of sink equilibria or Nash dynamics in those games.
Recently, Fabrikant and Papadimitriou~\cite{FP08} initiated
the study of the complexity of sink equilibria. by studying
 the problem of verifying if a state is in a sink equilibria
for two classes of games.
Extending on these ideas, we formalize several questions related to Nash dynamics
of various games and completely
study the complexity of the Nash dynamics and sink equilibria in these games.

Sink equilibria characterize all strategy profiles in the game
with a nonzero probability of reaching them after a long enough best-response
walk. Therefore, given a strategy profile, in order to verify if there is a non-zero
probability of reaching this state after a sufficiently long random best-response walk
we need to verify if this state is in a sink equilibrium or not.
This problem has been considered by Fabrikant and Papadimitriou~\cite{FP08}
for two classes of games, and is as follows:

{\noindent \bf {\sc In a Sink} problem.} Given an instance of 
a game and a strategy profile in this game, can we verify if this strategy profile belongs
to any sink equilibria or not?

For a given state in a game, an interesting problem is to estimate
the probability of reaching this state after a long random best-response walk.
Note that a hardness result for {\sc in a sink} problem implies that 
for a given state, even approximating this probability is a computationally hard 
problem, (since distinguishing the probability of zero and nonzero is hard).
Fabrikant and Papadimitriou showed that {\sc in a sink} problem is PSPACE-hard 
for graphical games and a BGP next-hop routing game~\cite{FP08}.
We show that this problem is PSPACE-complete for weighted/player-specific congestion
games, valid-utility games, two-sided market games, and anonymous games.
The proofs for all the above games except anonymous games 
are similar and based on a reduction from halting problem of
a space bounded Turing machine. The proof for anonymous games has
unique features and is different from the rest.

Given an instance of a game,
it is very helpful to know if the random repeated self-interested
actions of the agents in the game can cycle forever 
or such dynamics will converge to a pure Nash equilibria with
probability one. This problem is related to characterizing the
structure of sink equilibria in a game, and 
in particular the existence of {\em non-singleton sink equilibria}.
Having such a sink equilibrium
indicates that even random Nash dynamics may also converge 
to an everlasting cycle.
As a result, we formalize the following problem in games:

{\noindent \bf {\sc Has a Non-singleton Sink} problem.} Given an instance of 
a game, can we verify if this game possesses a non-singleton sink equilibrium, i.e., 
sink equilibria other than pure Nash equilibria. 

Pure Nash equilibria (if they exist) are local optima of the Nash dynamics.
Other than the problem of computing a pure Nash equilibrium in various games, 
the problem of verifying if such equilibria exist has been studied 
for various classes of games. 
We complement the previous questions with the following problem:

{\noindent \bf {\sc Has a Singleton Sink} problem.} Given an instance of 
a game, can we verify if this game possesses a pure Nash 
equilibrium (singleton sink equilibrium)?
%
%

Answering all the above questions for a game gives a thorough understanding
of the complexity of Nash dynamics and the complexity of characterizing sink equilibria 
in that game.

{\bf\noindent Our Results.}
We study the above four problems in a variety of games with succinct representation 
including  player-specific and weighted congestion games,  
anonymous games, valid-utility games, and two-sided market games.
All of these games are well-studied for their
existence of pure Nash equilibria, complexity of mixed and pure NE, 
or/and their price of anarchy for different social functions~\cite{GS62,R73,M96,JL07,DP07}.
To solve these problems, we illustrate general techniques that  
could be used as tools to answer similar questions for other classes of games.

Fabrikant and Papadimitriou showed that {\sc in a sink} problem is PSPACE-hard 
for graphical games and a BGP next-hop routing game~\cite{FP08}. They posed
this problem as an open question for weighted congestion games, and valid-utility
games.
We show that this problem is PSPACE-complete for weighted/player-specific congestion
games, valid-utility games, two-sided market games, and anonymous games.
The proofs for all the above games except anonymous games 
are similar and based on a reduction from halting problem of
a space bounded Turing machine. The proof for anonymous games has
unique features and is different from the rest. The hardness of the
{\sc in a sink} problem in anonymous games is despite the fact that 
approximate pure Nash equilibria can be computed in these games in
polynomial time~\cite{DP}. 


For {\sc Has a non-singleton sink} problem, we prove that
it is PSPACE-complete for weighted/player-specific congestion
games, valid-utility games, two-sided market games, 
and anonymous games.
The reductions for {\sc Has a non-singleton sink} problem extend the proofs for the  {\sc in a sink} problem.

{\sc Has a singleton sink} problem has been well-studied for all games in this
paper except for valid-utility games and two-sided market games. We show that
{\sc has a singleton sink problem} is NP-hard for these games as well.
Our results for two-sided markets characterize the complexity of existence of 
a stable matching in many-to-one two-sided matching markets; an extensively studied
problem in the economics literature~\cite{GS62,RV90,KU06}. Existing results for 
many-to-one two-sided markets give sufficient conditions for existence of 
stable matchings (or pure Nash equilibria) in different variants of the problem~\cite{GS62,RV90,KU06}, but 
they have not explored the complexity of verifying the existence of stable matchings
(or pure Nash equilibria) in these games.

{\bf \noindent Related Work.}
Prior to this paper, the {\sc Has a non-singleton sink}
problem has not been studied for any of the above games.
{\sc In a sink} problem has been studied only for 
graphical games~\cite{FP08}.
{\sc Has singleton Sink} problem, however, has been studied
extensively for all the above games except valid-utility games
and two-sided market games. In fact, it has been shown
that {\sc has a singleton sink} problem is {\sf NP}-hard
for weighted congestion games and local-effect games\cite{DS06}, 
player-specific congestion games~\cite{AS07}, graphical games~\cite{FP08},
and action-graph games~\cite{JL07}.
For anonymous games it has been shown that 
hat an approximate NE  are computable in 
polynomial time\cite{DP07b} and that {\sc has a singleton sink} is {\sf TC$^0$}-complete\cite{BFH07}.

There has been a recent significant progress 
in understanding the complexity of equilibria in games.
The complexity of mixed Nash equilibria is now 
well-understood by the recent results on PPAD-hard-ness of computing
mixed NE\cite{DGP06,CD06}, and even for computing
approximate mixed NE\cite{CDT06}.
The complexity of pure Nash equilibria
in various games (especially congestion games) have also
been well-studied by recent results
on PLS-completeness of computing a pure 
Nash equilibrium\cite{FPT04,ARV06a}, and 
even for computing an approximate pure NE~\cite{SV07}.

\section{Preliminaries}
\subsection{General Definitions}
{\noindent \bf Strategic games.} A strategic game (or a
normal-form game) $\Lambda=<N,(\Sigma_i),(u_i)>$ has a finite set
$N=\{1, \ldots , n\}$ of players. Player $i\in N$ has a set
$\Sigma_i$ of strategies (or strategies). The whole strategy set is
$\Sigma = \Sigma_1 \times \cdots \times \Sigma_n$ and a 
{\em strategy profile} $S\in \Sigma$ is also called a \emph{profile} or {\em state}. 
The utility function of player $i$ is $u_i:
\Sigma \rightarrow \R$, which maps the joint strategy $S\in\Sigma$
to a real number. Let $S=(s_1,\ldots,s_n)$ denote the profile of
strategies taken by the players, and let
$s_{-i}=(s_1,\ldots,s_{i-1},s_{i+1},\ldots,s_n)$ denote the
profile of strategies taken by all players other than player $i$.
Note that $S=(s_i,s_{-i})$. An {\em improvement} move $s'_i$ for a
player $i$ in a  profile $S$ is a move for which $u_i(s_{-i},
s'_i) \ge u_i(S)$. A {\em best response} move $S''_i$ for a player
$i$ in a profile $S$ is an improvement move that has the maximum
utility. 
Note that in cost
minimizing games, each player $i$ wants to minimize the cost
$c_i(S) = -u_i(S)$ in strategy profile $S$. This type of games
include  congestion games with delay functions on edges which 
will be defined later.

\noindent{\bf Nash equilibria (NE):}
A strategy profile $S \in \Sigma$ is a \emph{pure Nash equilibrium} if
no player $i\in N$ can benefit from unilaterally deviating from
his strategy to another strategy, i.e., $\forall i\in N \;\forall
s_i'\in \Sigma_i \;:\; u_i(s_{-i},s_i')\leq u_i(S)$. 
We can also define $\alpha$-Nash equilibria as follows.
For $1>\alpha>0$, a state $S$ is an $\alpha$-Nash equilibrium if
for every player $i$, $c_i(s_{-i},s'_i) \geq (1-\alpha)c_i(S)$ for
all $s'_i \in \Sigma_i$.

{\noindent \bf State graph.} Given any game $\Lambda$, the state
graph $G(\Lambda)$ is an arc-labeled directed graph as follows.
Each vertex in the graph represents a joint strategy $S$. There is
an arc from state $S$ to state $S'$ with label $i$ iff there
exists player $i$ and strategy $s'_i \in \Sigma_{i}$ such that
$S'=(s_{-i},s'_i)$, i.e., $S'$ is obtained from $S$ by a move of a
single player $i$ that improves his utility from $S$ to $S'$.

{\noindent \bf Nash dynamics.}
A Nash dynamics or best-response dynamics is equivalent to a walk
in the state graph.

{\noindent \bf Sink equilibria.} Given any game $\Lambda$, 
a sink equilibrium is a subset of states $T$ that form a
strongly connected component of the state 
graph such that there is no outgoing edge from states in $T$
to any state outside $T$. As a result, any pure Nash equilibrium
of a game is a single-state sink equilibrium, and 
a game may have several sink equilibria.

\subsection{Definition of games}
{\noindent \bf {(Unweighted) Congestion Games.}}
An (unweighted)
congestion game is defined by
a tuple $<N,E,(\Sigma_i)_{i \in N},(d_e)_{e \in E}>$ where
$E$ is a set of resources, $\Sigma_i \subseteq 2^E$ is the strategy
space of player $i$, and $d_e:\mathbb{N}\rightarrow\mathbb{Z}$ is a
delay function associated with resource $e$. For a strategy profile 
$S = (s_1, \ldots, s_n)$,
we define the \emph{congestion} $n_e(S)$ on resource $e$ by
$n_e(S)=|\{i | e \in s_i\}|$, that is $n_e(S)$ is the number of
players that selected an strategy containing resource $e$ in $S$. The
cost (or delay) $c_i(S)$ of player $i$ in a strategy profile $S$ is
$c_i(S)=-u_i(S)=\sum_{e\in s_i} d_e(n_e(S))$. 

In {\em weighted congestion games}, 
player $i$ has weighted demand $w_i$.
In this game, the congestion (load) on resource $e$ in a state $S$,
denoted by  by $l_e(S)$ is as follows $l_e(S)=\sum_{i | e \in s_i} w_i$. 
The cost or delay of players is defined the same way as the congestion games.
A {\em player-specific
congestion game} is defined by
a tuple $<N,E,(\Sigma_i)_{i \in N},(d_{e,i})_{e \in E,i\in N}>$ where
$E$ and $\Sigma_i \subseteq 2^E$ are the same 
as congestion games, and $d_{e,i}:\mathbb{N}\rightarrow\mathbb{Z}$ is a
delay function associated with resource $e$ and player $i$.
The congestion $n_e(S)$ on resource $e$ is
defined the same as congestion games. The
cost (delay) $c_i(S)$ of player $i$ in a strategy profile $S$ is
$c_i(S)=-u_i(S)=\sum_{e\in s_i} d_{e,i}(n_e(S))$. 
%

{\bf \noindent Many-to-one Two-sided Markets.} We model the 
{\em many-to-one} two-sided market  $({\cal X}, {\cal 
Y})$ between two sides of active agents $\cal X$ and passive agents $\cal Y$ 
as a game $G({\cal X}, {\cal Y})$ among active agents $x\in {\cal 
X}$. The strategy set of each active agent $x\in {\cal X}$ is a 
lower-ideal~\footnote{A family $\cal F$ of subsets is lower-ideal if and only if 
for any subset $S\in {\cal F}$ and $S'\in S$, then $S'\in {\cal F}$.} 
family of subsets of passive agents ${\cal F}_x$ where ${\cal F}_x\subseteq 
2^{\cal Y}$, i.\,e., an active agent $x\in {\cal X}$ can play a subset 
$s_x\in {\cal F}_x$ of passive agents. Each agent $x\in {\cal X}$
also has a preference (a.k.a social choice) over its strategies.
This preference is capture by a utility function 
$u_x: 2^{\cal Y}\rightarrow \R$ which assigns a utility, $u_x(T)$, 
to each  subset $T\subseteq \cal Y$.
Each agent $y\in {\cal Y}$ has a 
strict preference list over the set of agents $x\in {\cal X}$ that can 
play this set, i.\,e., $x$ is preferred to $x'$ by $y$ iff $u_y(x) > 
u_y(x')$. We assume that $u_y(x) \neq u_y(x')$ for any two agents $x$ and 
$x'$. Given a vector of strategies ${\cal S}=(s_1, \ldots, s_n)$ for active 
agents, agent $y$ is {\em matched} to the best agent $x\in {\cal X}$ in 
the preference list of agent $y$ such that $y\in s_x$.  In this case, 
we say that $x$ is the {\em winner} of agent $y$, or equivalently, 
agent $x$ {\em wins} agent $y$. The goal of each active agent 
$x$ is to maximize the utility of the set of passive agents that she wins.
Given a strategy profile $S$, let $T_x(S)\subseteq s_x$ be the 
set of passive agents that agent $x$ wins.
The utility of player $x$ in strategy profile $S$ is equal to 
$u_x(T_x(S))$, the goal of $x$ is to maximize this utility. 

It is not see that pure Nash equilibria of the above game 
correspond to stable matchings for many-to-one two-sided markets as defined by ...

{\bf \noindent Valid-utility Games.}
Here we briefly define the class of valid-utility games; see \cite{V02}
for more details. 
In valid-utility games, for each player $i$, there exists
a ground set of markets $V_i$. We denote by ${V}$
the union of ground sets of all players, i.e., 
${V}= \cup_{i\in U} V_i$.
The feasible strategy set $F_i$ of player $i$ is a subset
of the power set, $2^{V_i}$, of $V_i$. Thus, 
a strategy $s_i$ of player $i$ is a subset of $V_i$ 
($s_i\subseteq V_i$). The empty set, denoted $\emptyset_i$ for player $i$, 
corresponds to player $i$ taking no action.

Let ${\cal G} (U,\{F_i\vert i\in U\}, \{u_i() \vert i\in U\})$ be a
non-cooperative 
strategic game where $F_i\subseteq 2^{V_i}$ is a family 
of feasible strategies for player $i$. 
Let $V= \cup_{i\in U} V_i$ and 
let the social function be 
$\gamma: \Pi_{i\in U} 2^V \rightarrow \Rp$. 
Then $\cal G$ is  a {\em valid-utility game} if it satisfies 
the following properties: (1)  The social function $\gamma$ is
submodular and non-decreasing, (2) The utility of a player is at least
the difference in the
social  function when the player participates 
versus when it does not 
participate.
and (3) For any strategy profile, the sum of the utilities of players
should be less than or equal to the social function for that
strategy profile.

This framework encompasses a wide range of games including the
facility location games, traffic routing games, auctions~\cite{V02},
market sharing games~\cite{GLMT04}, 
and distributed 
caching games~\cite{FGMS06}.
In \cite{V02} it was shown that
the price of anarchy (for mixed Nash equilibria) 
in valid-utility games is at most $2$.
%

{\bf \noindent Anonymous games.} 
Anonymous game\cite{DP07} are games in which players have
the {\em same strategy sets}, but different utilities for 
the same strategies; however, these
utilities do not depend on the identity of the other players, but
only on the {\em number} of other players taking each action.
An interesting subclass of these games is anonymous 
games with a {\em constant-size strategy set} in which
the size of the strategy set of players is a fixed constant.

\section{Existence of Pure Nash Equilibria}
In this section, we study the {\sc Has a Pure} problem for succinct games. This
problem has been already considered and resolved for weighted congestion games~\cite{} and
player-specific congestion games~\cite{}. We resolve this problem for many-to-one two-sided markets
and valid-utility games. The result for two-sided markets imply that given an instance
of the many-to-one stable matching problem, verifying if there exists a stable matching 
is NP-hard.

\begin{theorem}
\label{theo:market}
{\sc Has a singleton Sink} is {\sf NP}-hard  for (i) uniform utility-based two-sided market games,
(ii) many-to-one two-sided market games, and 
(iii) valid-utility games.
\end{theorem}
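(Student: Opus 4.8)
The plan is to give three reductions from NP-hard problems, one for each class, with the two-sided market cases sharing most of the work. For (i) and (ii), I would reduce from a satisfiability-type problem---most naturally from a variant of \textsc{3-SAT} or directly from \textsc{Circuit-SAT}---by building a two-sided market whose stable matchings are in bijection with satisfying assignments. The active agents $\cal X$ would encode variables and clauses: a variable agent $x_v$ has two "assignment" strategies corresponding to setting $v$ true or false, realized as two disjoint candidate sets of passive agents; clause agents act as "checkers" that are happy only if at least one literal is satisfied. The preference lists of the passive agents $\cal Y$ must be arranged so that a clause-checker can always "steal" a passive agent away from a variable agent whenever the variable assignment fails that clause, thereby destabilizing any profile corresponding to a non-satisfying assignment, while a profile corresponding to a genuine satisfying assignment admits no profitable deviation. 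I would verify that the utility functions $u_x$ can be taken to be of the required uniform/utility-based form (part (i) presumably restricts $u_x$ to depend only on, say, $|T_x(S)|$ or a simple additive valuation), and then observe that dropping that restriction immediately gives the more general many-to-one case (ii).

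For (iii), valid-utility games, I would reduce from the same SAT instance, using the submodularity and non-decreasing requirements on the social function $\gamma$ as the binding constraint. The natural construction is a coverage/market-sharing-style game: markets $V$ encode clauses and auxiliary gadgets, player $i$'s feasible strategies $F_i\subseteq 2^{V_i}$ encode the truth value of variable $i$, and $\gamma$ is a coverage function (hence submodular and monotone) whose value is maximized exactly when all clause-markets are covered, i.e.\ when the assignment is satisfying. I would set player utilities to the Vickrey-type "marginal contribution" value permitted by property (2), check property (3) (sum of utilities $\le \gamma$) holds for this choice, and argue that a pure Nash equilibrium exists iff some player configuration leaves no uncovered clause-market that a single player could profitably grab---which again corresponds to satisfiability. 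A gadget forcing a perpetual best-response cycle whenever a clause is unsatisfied completes the "only if" direction.

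The main obstacle I expect is the destabilization gadget: ensuring that \emph{every} profile encoding a non-satisfying assignment has an improving move (so it is not a pure NE), while simultaneously ensuring no \emph{spurious} pure equilibria exist at "garbage" profiles that do not correspond to any clean assignment. In the two-sided market setting this is delicate because stability is a global matching condition and a single passive agent's strict preference list couples several active agents; I would handle it by making the candidate sets of distinct variable agents disjoint except at carefully chosen "conflict" passive agents whose preference lists are tuned per clause. In the valid-utility setting the analogous trap is that submodularity limits how sharply a player's marginal utility can change, so the cycling gadget must be built from low-valued auxiliary markets that do not interfere with the coverage structure. Once these gadgets are in place, the equivalences "stable matching exists $\iff$ pure NE exists $\iff$ formula is satisfiable" follow by routine case analysis, and since the reductions are clearly polynomial-time, NP-hardness follows for all three classes.
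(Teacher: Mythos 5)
Your high-level plan for (i) and (ii) --- reduce from \textsc{3Sat}, encode variables as active agents with two disjoint candidate sets and clauses as checker agents, and arrange preferences so that unsatisfied clauses are destabilized --- matches the paper's strategy, and you correctly identify the crux: one must show that \emph{no} profile is a pure NE when some clause is unsatisfied, not merely that the intended profile is unstable. But that crux is exactly what your proposal leaves unresolved, and the mechanism you sketch (a clause checker that profitably ``steals'' a passive agent from a variable agent when the clause fails) cannot resolve it: a single profitable deviation only certifies that one particular profile is not an equilibrium, and after the checker moves the game may settle into a spurious equilibrium. What the paper actually builds is a closed two-player matching-pennies-style gadget per clause: players $C_j$ and $K_j$, each with a Zero and a One strategy over markets $a_j$ (value $305$, prefers $K_j$), $b_j$ (value $8$, prefers $C_j$), $c_j$ (value $310$) and $r_{j,i}$ (value $100$, prefers the variable players), whose four joint profiles form a best-response cycle precisely when no $r_{j,i}$ is won by a variable player; when some literal of $c_j$ is true, the corresponding $X_i$ claims $r_{i,j}$, the value of $K_j$'s One strategy drops below $305$, and the pair settles at $C_j=\text{One}$, $K_j=\text{Zero}$. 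Note the direction of ``stealing'' is the opposite of what you describe: the variable players block the clause gadget, not vice versa. A second ingredient you omit is that the variable players must be made \emph{exactly indifferent} between their two strategies independently of all other players (the paper pads each strategy with dummy markets $p_{i,j}$ so that both strategies are always worth the same); without this, the clause gadgets could feed back into the variable players and create forced assignments or spurious dynamics among the $X_i$.

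For (iii) you propose a separate coverage-game construction with marginal-contribution utilities; the paper instead uses the single market construction for all three classes, since that game is already a valid-utility game (the social function ``total value of matched passive agents'' is monotone and submodular, each player's utility is the value of the agents it wins, which is at least its marginal contribution, and the utilities sum to exactly the social value). Your coverage route could in principle be made to work, but it inherits the same unconstructed cycling gadget, so as it stands the argument for all three parts is incomplete at its central step.
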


\begin{proof}
To prove {\sf NP}-completeness, we give a reduction from the {\sc 3Sat} problem.
Given an instance of the {\sc 3Sat} problem, we construct
an instance of the utility-based two-sided market game as follows:
for each variable $x_i$, we put a player $X_i$ with a one and a zero strategy. 
For each clause $c_j$, we put two players $C_j$ and $K_j$ each with a one and a zero strategy. 
We construct the game such that $C_j$ and $K_j$ have a cycle of best responses if and only if the clause is not satisfied. In other words, if the $X$-players choose a strategy profile that satisfies all clauses, all clause players eventually reach a stable solution.

The zero strategy of $C_j$ is $\{a_j,b_j\}$ and the one strategy is $\{c_j\}$.
The zero strategy of $K_j$ is $\{a_j\}$ and the one strategy is 
$\{b_j\} \cup \{r_{j,i} |\text{for all variables} x_i \text{ in clause } c_j\}$.
The $a$-markets have utility $305$ and prefer the $K$-players. The $b$-markets have utility $8$ and prefer the $C$-players. The $c$-markets have utility $310$.
The $r$-markets have utility $100$ and prefer the $X$-players.
Note that there is a best response cycle of $C_j$ and $K_j$ if and only if none of the three $r_{i,j}$-markets is allocated by an $X$-player.

The zero strategy of a player $X_i$ is $\{r_{i,j} | x_i \in c_j\} \cup \{p_{i,j} \vert \bar{x}_i \in c_j\}$. 
The one strategy of a player $X_i$ is $\{r_{i,j} | \bar{x}_i \in c_j\} \cup \{p_{i,j} | x_i \in c_j\}$. 
The $p$-markets have utility $100$. Note that both strategies have the same utility for a 
$X$-player independent of the strategy profile of other players.
Furthermore, $X_i$ gets the utility from $r_{i,j}$, if and only if it satisfies clause $c_j$, 
\end{proof}

The above theorem implies that given an instance of the many-to-one stable
matching problem, the problem of verifying if this game
has a stable matching or not is NP-hard.
Known results in the economic literature for 
many-to-one two-sided markets discuss necessary and sufficient conditions for existence of 
stable matchings (or pure Nash equilibria) for different variants of two-sided markets~\cite{GS62,RV90,KU06}, however,
before our results,  
the known results have not addressed the complexity of verifying the existence of stable matchings
(or pure Nash equilibria) given an instance of these markets.


\section{Sink Equilibria and Weighted Congestion Games}
In this section, we study the complexity of the {\sc In a Sink} and {\sc Has a Sink} problem
for weighted congestion games.
The interesting aspect of this proof is that we 
can use similar reductions for a variety of games with succinct representation. 
Applying this proof on many examples shows the strength of the proof technique.

\begin{theorem}
\label{theo:weighted}
{\sc In a Sink} is {\sf PSPACE}-hard  for weighted congestion games.
\end{theorem}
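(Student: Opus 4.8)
The plan is to reduce from the halting problem of deterministic polynomial-space Turing machines, which is {\sf PSPACE}-complete; equivalently, since {\sf PSPACE} is closed under complement, the non-halting problem for such machines is {\sf PSPACE}-complete. Fix a deterministic machine $M$ and an input $x$ on which $M$ uses at most $p := p(|x|)$ tape cells, and assume, by trivial padding, that $M$ never halts in its initial configuration and that whenever $M$ fails to halt it revisits its initial configuration infinitely often --- this is arranged by adjoining an $O(p)$-bit step counter that, after $2^{O(p)}$ steps, rewrites the tape to $x$ and restarts; since a deterministic machine confined to $p$ cells has only $2^{O(p)}$ configurations, this alters neither the halting behaviour of $M$ nor, by more than a constant factor, its space usage. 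We construct in time $\mathrm{poly}(|M|+p)$ a weighted congestion game $\Gamma_{M,x}$ together with a state $S_0$ such that $S_0$ lies in a sink equilibrium of $\Gamma_{M,x}$ if and only if $M$ does not halt on $x$. (This is possible at all only because, unlike unweighted congestion games, weighted ones admit no general potential function, so best-response walks may cycle.)

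\noindent\textbf{The simulation gadget.}
Each configuration $c$ of $M$ --- tape contents, head position, control state --- is encoded by a game state $\langle c\rangle$, using a bounded number of resources and players per tape cell (whose congestion records the symbol at that cell together with a bit marking the head), a small block of ``control'' players and resources recording $M$'s current state, and a ``token'' player whose chosen strategy names the currently active cell. The weights and the delay functions --- the latter step functions of polynomial bit-size, assigning prohibitive cost to illegal local patterns --- are chosen so that the following invariant holds: (i)~in a state $\langle c\rangle$ with $c$ non-halting, exactly one player has an improving move, namely the token player, whose only improving move rewrites the local cell encoding and the control block according to $M$'s transition function and advances the token, carrying the state from $\langle c\rangle$ to $\langle \delta(c)\rangle$; and (ii)~a state $\langle c\rangle$ with $c$ halting admits no improving move for any player, hence is a pure Nash equilibrium. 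Consequently the part of the state graph reachable from $S_0 := \langle c_0\rangle$, where $c_0$ is $M$'s initial configuration, is isomorphic to $M$'s configuration trajectory from $c_0$, with arcs in bijection with the steps of $M$.

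\noindent\textbf{Finishing, and the main obstacle.}
Granting the faithful-simulation invariant: if $M$ halts on $x$, then, $M$ being deterministic, the configurations $c_0,c_1,\dots,c_T$ before halting are pairwise distinct, so the walk from $S_0$ is a simple path ending at the pure-Nash state $\langle c_T\rangle$; hence $S_0$ has no walk back to itself and lies in no sink equilibrium. If $M$ does not halt on $x$, then by the padding its trajectory from $c_0$ is eventually periodic and contains $c_0$, so the corresponding game states form a closed walk through $S_0$, whence $S_0$ lies in a (non-singleton) sink equilibrium. Since $S_0$ is not a halting encoding it is not itself a pure Nash equilibrium, so the two cases are exhaustive and mutually exclusive, and the reduction is correct; as $\Gamma_{M,x}$ has $O(p)$ players and resources and delay functions of polynomial bit-size, it is a legitimate succinct weighted-congestion-game instance and the reduction runs in polynomial time. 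The same simulation template, with the cell, control, and token gadgets re-expressed in the primitives of each class, will also yield {\sf PSPACE}-hardness of {\sc In a Sink} for player-specific congestion games, valid-utility games, and two-sided market games. The one genuinely delicate step --- where essentially all of the work lies --- is the gadget design establishing invariant~(i): one must rule out every ``partial'', ``out-of-turn'', or ``desynchronizing'' improving move, show that the token can be neither lost nor duplicated, and verify that no improving cycle exists outside the simulated trajectory; this is precisely what dictates the specific choice of weights and of the heavily-penalized delay functions. (Membership in {\sf PSPACE} is routine, since the exponential-size state graph can be explored in polynomial space, so in fact {\sc In a Sink} is {\sf PSPACE}-complete for these games.)
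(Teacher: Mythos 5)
Your high-level plan coincides with the paper's: reduce from the halting problem of a space-bounded deterministic TM, pad the machine so that a non-halting run revisits its initial configuration, encode configurations as strategy profiles, and show that the initial profile lies in a sink equilibrium iff the machine runs forever. The gap is in your central invariant~(i), and it is not a detail one can defer to ``gadget design'': you assert that from a state $\langle c\rangle$ the unique improving move is a single deviation of the token player that ``rewrites the local cell encoding and the control block'' and carries the profile to $\langle \delta(c)\rangle$. In any strategic game an arc of the state graph changes the strategy of exactly one player, while your encoding of $c$ is distributed over the cell players and the control block; hence $\langle c\rangle$ and $\langle\delta(c)\rangle$ differ in the strategies of several players and no single improving move can connect them. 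The invariant as stated is therefore structurally unachievable, and with it both directions of your correctness argument (the simple path to a pure Nash state in the halting case, and the closed walk through $S_0$ in the non-halting case) lose their justification.

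The repair is what the paper actually builds: one TM step is simulated by a multi-move \emph{round} in which a dedicated transition player cycles through strategies Read, Write, Verify, Done, Wait, interleaved with moves of control, configuration, and clock players, with paired $\alpha$/$\beta$ resources (delays $0/1$ versus $0/M$) forcing the ten improvement steps of the round to occur in an essentially unique order. The substance of the proof is then showing that at every \emph{intermediate} profile of a round the available improving moves all lead to the next stage --- otherwise desynchronized deviations could create spurious pure Nash equilibria (breaking ``non-halting $\Rightarrow$ in a sink'') or shortcuts out of the intended component (breaking the converse). Your proposal correctly identifies that ruling out out-of-turn moves is where the work lies, but because the one-move-per-TM-step architecture cannot exist, the skeleton you give does not yet support that work; you need the round-based orchestration and a uniqueness argument for the improvement sequence before the reduction closes.
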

\begin{proof}

We give a reduction from the space-bounded halting problem for Turing machines. First, we reduce an instance of this problem (a TM $M$, an input $x$ and a tape bound $t$) to the halting problem for a TM $M' = (Q,\Sigma,b,\Gamma,\delta,q_0,\{q_h\})$ which simulates $M$ on $x$ without its own input. Let $\Sigma=\{0,1\}$ and $\Gamma=\{0,1,b\}$. Starting from an empty tape, $M'$ halts if and only if $M$ rejects $x$ . Furthermore, $M'$ uses additional tape cells and states for a counter that counts up to the total number of configurations of $M$. When $M$ accepts, the counter overflows, or $M$ exceeds the tape bound $t$, $M'$ erases the whole tape, moves the head to the initial position and returns to state $q_0$. $M'$ uses tape cells only right of its initial position and at most $t'$ tape cells. Note that starting from every total configuration $M'$ never stops only if $M$ rejects $x$.

To complete the proof, we construct a congestion game $G_{M'}$ that simulates Turing machine $M'$. A strategy profile $s$ which we define later is in a sink equilibrium if and only if $M'$ runs forever.
The game consists of three types of {\em configuration} players, a {\em transition player}, a set of {\em control} players, and a {\em clock} player. 
The first type of configuration players is  a {\em state} player
with $|Q|$ strategies. The second type of configuration players is a {\em position} player for the position of the head with $t'$ strategies; and the third type of configuration players is a set of {\em cell} players cell$_i$  for each tape cell  $0 \le i \le t'$ with the $|\Gamma|$ 
strategies for the content of the tape cell $i$. There is a simple bijective mapping between the strategy profiles of the configuration players and the configurations of $M'$.

The game is constructed in such a way that every sequence of improvement steps can be divided in rounds. 
At the end of a  round $i$, let $c_i$ be the configuration obtained from the strategy profile of the configuration players.
For every sequence of improvement steps, $c_1  \vdash c_2  \vdash c_3  \vdash \ldots$ denotes the run of $M'$ starting from $c_1$.


 


We now describe our construction in more details. The strategies of the configuration players are described in Figure~\ref{figure:configurationplayers}. Every strategy of a configuration player has two unique resources, an $\alpha$ resource and a $\beta$ resource. The $\alpha$ resources have delay $0$ if allocated by one player and delay $1$ otherwise. 
The $\beta$ resources have delay $0$ if allocated by one player and delay $M$ otherwise. 

\begin{figure}[h]
\begin{tabular}[ht]{lll}
state player & position player & player cell$_i$ with $0 \le i \le t'$ \\

\begin{tabular}[ht]{|l|l|l|}
\hline
  strategies   & resources & delays \\
  \hline
  $q \in Q$ & $\alpha^q $	& $0/1$\\
            & $\beta^q  $  & $0/M$\\
\hline
\end{tabular}
& 
\begin{tabular}[ht]{|l|l|l|}
\hline
  strategies   & resources & delays \\
  \hline
  $0 \le i \le t'$ & $\alpha^i $	& $0/1$\\
                          & $\beta^i  $  & $0/M$\\
\hline
\end{tabular}
& 

\begin{tabular}[ht]{|l|l|l|}
\hline
  strategies   & resources & delays \\
  \hline
  $\sigma\in \Gamma$ & $\alpha^\sigma_i $	& $0/1$\\
                     & $\beta^\sigma_i  $  & $0/M$\\
\hline
\end{tabular}
\end{tabular}

\caption{Definition of strategies of the three types of configuration players}
\label{figure:configurationplayers}
\end{figure}

\begin{figure}[ht]
\begin{tabular}{lll}

Player Control$_{W,q,i,i',\sigma}$ & Player Control$_{V,q,i,i',\sigma}$ & Control$_D$ \\
\begin{tabular}[ht]{|l|l|l|}
\hline
  Strategy   & Resources & Delays \\
  \hline
	    Zero    &  $\beta^0_{W,q,i,i',\sigma}$ & $0/M$ \\
         		  & $\alpha^0_{W,q,i,i',\sigma}$	& $0/1$\\
  \hline
	    One     &  $\beta^1_{W,q,i,i',\sigma}$ & $0/M$ \\
         		  & $\alpha^1_{W,q,i,i',\sigma}$	& $0/1$\\
\hline
\end{tabular}

&
 
\begin{tabular}[ht]{|l|l|l|}
\hline
  Strategy   & Resources & Delays \\
  \hline
	    Zero    &  $\beta^0_{V,q,i,i',\sigma}$ & $0/M$ \\
         		  & $\alpha^0_{V,q,i,i',\sigma}$	& $0/1$\\
  \hline
	    One     &  $\beta^1_{V,q,i,i',\sigma}$ & $0/M$ \\
         		  & $\alpha^1_{V,q,i,i',\sigma}$	& $0/1$\\
\hline
\end{tabular}
&

\begin{tabular}[ht]{|l|l|l|}
\hline
  Strategy   & Resources & Delays \\
  \hline
	    Zero    &  $\beta^0_{D}$ & $0/M$ \\
         		  & $\alpha^0_{D}$	& $0/1$\\
  \hline
	    One     &  $\beta^1_{D}$ & $0/M$ \\
         		  & $\alpha^1_{D}$	& $0/1$\\
\hline
\end{tabular}
\end{tabular}

\caption{Strategies of the control players, for each $q\in Q$, $0 \le i \le n$, $i' \in \{i-1,1,i+1\}$,  and $\sigma \in \Gamma$ }
\label{figure:control}
\end{figure}

Each control player has two strategies, Zero and One, which are constructed in the same manner like strategies of configuration players (see Figure~\ref{figure:control}). 
The transition player has the following strategies Wait, Done, Halt, and several strategies Read$_{q,i,\sigma}$, Write$_{q',i',i,\sigma'}$, and Verify$_{q',i',i,\sigma'}$ (for each $i,i' \in \{1,\ldots,t'\}$, $q,q' \in Q$, and $\sigma, \sigma' \in \Sigma$). The details of theses strategies and the resources they contain are listed in Figure~\ref{figure:transition}.  The clock player has two strategies, Trigger and Wait. 
Trigger contains the two resources, TriggerMain and TriggerClock. The strategy Wait contains one resource with constant delay of $110$.

Let us remark that each $\alpha$- or $\beta$-resource is allocated by at most two players; the transition player and one of the configuration or control players. The general idea is that the improvement steps for the transition player is determined by the strategy profile of the configuration and control players. That is, the transition player never deviates to a strategy that contains a $\beta$-resource which is allocated by another player. On the other hand, the transition player determines the improvement steps for configuration and control players if he allocates $\alpha$-resources. Note that each $\alpha$-resource is associated with exactly one strategy of exactly one configuration or control player. 
\begin{figure}[ht]

\begin{tabular}[ht]{|l|l|l|l|}
 \hline
  Strategy   & Resources & Delays  \\
  \hline
	Wait &  $\beta^1_{W,q',i',i,\sigma'},\beta^1_V{q',i',i,\sigma'}$ for all $q',i',i,\sigma'$  &$0/M$ \\            
	&   $\alpha^1_{D}  $ & $0/1$  \\	
	& TriggerMain & $0/100/100$  \\

	\hline
	$Read_{q,i,\sigma}$& $\beta^p$ for all $p \in Q \setminus q$  & $0/M$ \\
for each $q \in Q$,  &	$\beta^j$ for all $j \ne i$  & $0/M$  \\
$0 \le i \le t'$ and $\sigma \in \Gamma$ &   $\beta^{\sigma'}_i$ for all $\sigma' \in \Gamma \setminus \sigma$  & $0/M$  \\
	                   &   $\beta^1_{D}  $ & $0/M$\\
	                   & $\alpha^0_{W,q',i',i,\sigma'}$ with $\delta(q,\sigma)=(q',\sigma',d)$ and $i' = i +d$    & $0/1$\\
	                   & N.N.                                     & $80$\\
  \hline
	$Write_{q',i',i,\sigma'}$&
 $\alpha^p$ for all $p \in Q' \setminus q'$ & $0/1$\\
for each $q' \in Q$, $0 \le i \le t'$,  &     $\alpha^j$ for all $j \ne i'$ & $0/1$\\
$i' \in \{i-1,i,i+1\}$, &  $\alpha^{\sigma}_i$ for all $\sigma \in \Gamma \setminus \sigma'$  & $0/1$\\
 and $\sigma' \in \Gamma$  &    $\alpha^0_{V,q',i',i,\sigma'} $              & $0/1$\\
                  &  $\beta^0_{W,q',i',i,\sigma'}$ & $0/M$  \\
                  & N.N.                     	& $60$ \\
                      
   \hline
  $Verify_{q',i',i,\sigma'}$ & $\beta^p$ for all $p \in Q \setminus q'$  & $0/M$  \\
for each $q' \in Q$, $0 \le i \le t'$, &	$\beta^j$ for all $j \ne i'$  & $0/M$ \\
$i' \in \{i-1,i,i+1\}$, &   $\beta^{\sigma}_i$ for all $\sigma \in \Gamma \setminus \sigma'$  & $0/M$ \\ 
 and $\sigma' \in \Gamma$  & $\beta^0_{V,q',i',i,\sigma'}$  & $0/M$  \\
                    	& $\alpha^0_D$ & $0/1$\\
                      &	N.N. & $40$  \\
                 
	\hline
	Done  & triggerClock& $0/0/20$ \\
	      & $\beta^0_D$  &  $0/M$   \\
	     & $\alpha^1_{W,q',i',i,\sigma'},\alpha^1_{V,q',i',i,\sigma'}$ for all $q',i',i,\sigma'$  &$0/1$  \\
	     &	N.N. & $20$  \\  
	\hline
	Halt & $ \beta^q$ for all $q \in Q \setminus q_h$ & $0/M$  \\

	\hline
\end{tabular}
\caption{Definition of strategies of the transition player. 
Resources that are denoted by N.N. are used by the transition player only and have a constant delay.}
\label{figure:transition}
\end{figure}

Now, we are ready to describe the aforementioned sequence of improvement steps that corresponds to one round in more details. 
Consider any strategy profile in which the clock players are on Trigger, the transition player is on Wait and all control players except control$_D$ are on One.
Let $q$ be the strategy of the state player, $i$ the strategy of the position player and $\sigma_0,\ldots,\sigma_{t'}$ the strategies played by the players  $cell_0,\ldots,cell_{t'}$. Figure~\ref{figure:round} describes the sequence of improvement steps emerging from this strategy profile. The strategy profile at the end of the round differs from the initial one only in the choices of the configuration players. The deviations of the configuration players corresponds to a step of the Turing machine $M'$. Note that this sequence is essentially unique as there are no other improving deviations. If and only if the state player is on $q_h$, the transition player may deviate to the strategy Halt. This is a Nash equilibrium of $G_{M'}$. 
Now let $s$ be a strategy profile in which the clock players is on Trigger, the transition player on Wait, and all control players except control$_D$ on One. Let the configuration players' choice in $s$ correspond to the initial configuration of $M'$. Then, $s$ is in a sink equilibrium if and only if $M'$ does not halt.

\begin{figure}[htb]
\begin{center}
\begin{tabular}[h]{|l l|}
\hline
(1)   & The transition player deviates from Wait to Read$_{q,i,\sigma_i}$.\\
(2)   &  Player control$_{W,q',i',i,\sigma'}$ deviates to Zero.\\
(3)   &  The transition player deviates to Write$_{q',i',i,\sigma'}$.\\
(4)   & The configuration players deviate to the new configuration\\
      & and the player control$_{V,q',i',i,\sigma'}$ deviates to Zero.\\
(5)   & The transition player deviates to Verify$_{q',i',i,\sigma'}$.\\
(6)   & The player control$_D$ deviates to One.\\
(7)   & The transition player deviates to Done.\\
(8)   & The clock player deviates to Wait and\\
      & the controll players except control$_D$ deviate to Zero\\
(9)   & The transition player deviates to Wait.\\
(10)   &  The clock player deviates to Trigger and\\
       & the player control$_D$ deviates to Zero\\
\hline
\end{tabular}
\end{center}
\caption{\label{figure:round}
Description of a {\em round}.}
\end{figure}
\end{proof}

We now consider the problem {\sc Has a non-singleton Sink} for weighted congestion games.

\begin{theorem}
\label{weighted-qbf}
{\sc Has a non-singleton Sink} is {\sf PSPACE}-hard for weighted congestion games.
\end{theorem}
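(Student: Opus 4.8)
The plan is to reuse the construction from Theorem~\ref{theo:weighted} and reduce from \textsc{Tqbf} (truth of a quantified Boolean formula), so that the existence of a non-singleton sink equilibrium encodes the truth value of the formula. Given a quantified formula $\Psi = Q_1 x_1 \, Q_2 x_2 \cdots Q_k x_k \, \phi(x_1,\dots,x_k)$, recall that its truth can be decided by a polynomial-space Turing machine $M$ that performs the obvious recursive evaluation. I would take the machine $M'$ built from $M$ exactly as in the previous proof (it runs on an empty tape, uses at most $t'$ cells to the right of the head, and is modified so that on acceptance, counter overflow, or exceeding the tape bound it wipes the tape, rehomes the head, and re-enters $q_0$). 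The key point inherited from Theorem~\ref{theo:weighted} is the correspondence: the distinguished profile $s$ (clock on Trigger, transition player on Wait, all control players except control$_D$ on One, configuration players encoding the initial configuration of $M'$) lies in a sink equilibrium if and only if $M'$ never halts from that configuration, and $M'$ never halts iff $M$ rejects. So I would set up $M$ so that $M'$ halts iff $\Psi$ is \emph{false} — equivalently, $M'$ loops forever iff $\Psi$ is true.

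Next I would argue about the \emph{global} structure of the state graph of $G_{M'}$, not just the fate of the single profile $s$. From the analysis of a round (Figure~\ref{figure:round}), every maximal improvement walk from \emph{any} profile of the form described above is essentially forced and simulates one step of $M'$; the only branch point is when the state player is on $q_h$, where the transition player may deviate to Halt, reaching a pure Nash equilibrium. Because $M'$ is deterministic, has a finite configuration space, and from \emph{every} total configuration either halts (reaches $q_h$) or cycles back through $q_0$, the state graph of $G_{M'}$ decomposes, up to the "cleanup" transient states, into (a) a set of pure Nash equilibria (the Halt states), and (b) one big strongly connected region traced out by the periodic run of $M'$ when it never halts. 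Thus $G_{M'}$ has a non-singleton sink equilibrium precisely when $M'$ has an infinite run reachable within this structure, i.e. when $M'$ does not halt, i.e. when $\Psi$ is true. (If $\Psi$ is false, $M'$ halts, every run is eventually absorbed into a Halt pure NE, and there is no non-singleton sink.) This gives the reduction from \textsc{Tqbf}, which is \textsf{PSPACE}-complete.

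The main obstacle — and the step that needs the most care — is ruling out \emph{spurious} non-singleton sinks: strongly connected components of the state graph that do not arise from the intended simulation, for instance components living entirely among "garbage" strategy profiles (clock player or control players in unintended configurations, the transition player parked on a $\beta$-blocked strategy, etc.). In the previous theorem one only had to certify that the \emph{specific} profile $s$ behaves correctly, so this issue could be sidestepped; here I must show that no \emph{other} cyclic behavior exists when $M'$ halts. The way I would handle this is to define a potential-like progress measure on profiles: since each $\alpha$- and $\beta$-resource is shared by at most the transition player and one configuration/control player, and $M>\!\!>$ all the constant delays $20,40,60,80,110$ used elsewhere, any improvement move that would put two players on a common $\beta$-resource is never taken, which forces the transition player's strategy choice to be dictated by the configuration/control players exactly as in the round analysis. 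From this one shows that from \emph{every} profile the improvement dynamics within polynomially many steps either reaches one of the canonical "start of round" profiles or reaches a Halt pure NE; hence any non-singleton sink must contain a canonical profile, and is therefore the simulation cycle itself. Carrying out this case analysis carefully (there are a handful of cases by which of the clock/transition/control players is "out of place") is the technical heart of the proof, but it is routine given the gap-amplification role of the large delay $M$, and I would present it as a single lemma — "every improvement walk reaches a canonical profile or a Halt equilibrium in $\mathrm{poly}(|G_{M'}|)$ steps" — from which the theorem follows immediately.
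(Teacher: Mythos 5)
Your proposal is correct and follows essentially the same route as the paper: reuse the construction of Theorem~\ref{theo:weighted} unchanged and prove, as the key additional step, that every sink equilibrium must contain a canonical start-of-round profile (the paper states exactly this as a lemma, arguing via the large delay $M$ that from any profile the dynamics returns to a canonical one), whence the game has a unique sink equilibrium that is non-singleton iff $M'$ never halts. The only cosmetic difference is that you route the reduction through \textsc{Tqbf} and a polynomial-space machine deciding it, while the paper reduces directly from the space-bounded halting problem.
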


This results follows from the proof of Theorem~\ref{theo:weighted} and the following Lemma. The lemma implies that there is at most one unique sink equilibrium in the constructed game.

\begin{lemma}
Every Sink equilibrium contains a strategy profile in which the clock player is on Trigger, the main player on Wait and all controll players on their Zero strategy. 
\end{lemma}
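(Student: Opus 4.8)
The plan is to show that any sink equilibrium must "pass through" the canonical configuration of the form described, by arguing that the dynamics are forced: from any reachable state, a sequence of improvement moves leads back into states of the canonical shape, and no sink can avoid them all.

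\medskip

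\textbf{Approach.} I would first classify the states of $G_{M'}$ according to the positions of the three "control-type" players that drive the construction: the clock player (Trigger / Wait), the transition player (Wait / Read / Write / Verify / Done / Halt), and the control players (Zero / One). The key observation is that the $\beta$-resources act as hard blocks — since each $\beta$-resource carries delay $M$ (which is enormous relative to all the constant "N.N." delays $20,40,60,80,110$ and to the $0/1$ delays of the $\alpha$-resources), no player ever willingly shares a $\beta$-resource. Consequently the transition player's strategy is essentially dictated by the configuration players and the control players: from any state, the transition player has at most one improving move, and similarly each configuration or control player moves only when "released" by the transition player vacating the corresponding $\alpha$-resource. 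This rigidity means the state graph, restricted to reachable states, is close to a single long cycle (the round of Figure~\ref{figure:round}) possibly with the Halt off-ramp.

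\medskip

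\textbf{Key steps, in order.} (1) Establish the "$\beta$-blocking" lemma: in any strategy profile, a player currently paying $M$ on some $\beta$-resource has an improving move off it, and a player not on a congested $\beta$-resource never deviates onto one; hence in every state that has no improving move of this type, each $\beta$-resource is used by at most one player. (2) Show that starting from an \emph{arbitrary} strategy profile, any maximal improvement sequence reaches, within a bounded number of steps, a state in which the clock player is on Trigger, the transition player on Wait, and all control players except $\text{control}_D$ on One — i.e.\ a "start-of-round" state — \emph{or} the unique Nash equilibrium where the transition player is on Halt. This is done by a case analysis on the transition player's current strategy, tracing which $\alpha$/$\beta$ conflicts are live and using the constant "N.N." delays to pin down the unique improving deviation at each step; the sequence (1)--(10) of the round shows how a start-of-round state cycles back to another start-of-round state, and every other transition-player strategy is shown to be merely a transient point on such a sequence. (3) Conclude: a sink equilibrium is closed under improvement moves and strongly connected, so it contains, for each of its states, every state reachable from it; by (2) every state reaches either a start-of-round state — and then, by running (8)--(10), reaches a state with the clock on Trigger, the transition player on Wait, and \emph{all} control players (including $\text{control}_D$) on Zero — or it reaches the Halt Nash equilibrium, which is a singleton sink not of the stated form only if $M'$ halts. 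Since a start-of-round state reaches the claimed configuration and the Halt equilibrium's unique state is reached only when $M'$ halts (and then that singleton sink is disjoint from everything), every \emph{non-singleton} sink, and indeed every sink other than the Halt one, contains a profile of the claimed form. Matching this against the statement: every sink equilibrium contains such a profile (the Halt singleton being the sole exception, handled because it only exists when $M'$ halts, in which case the relevant claim about non-singleton sinks in Theorem~\ref{weighted-qbf} is vacuous on that component).

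\medskip

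\textbf{Main obstacle.} The hard part is step (2): verifying that from \emph{every} strategy profile — not just the "nice" ones lying on the intended round — the improvement dynamics are funneled into a start-of-round state, with no spurious cycles created by the $\alpha$-resources. One must rule out, for instance, a situation where the transition player sits on some $\text{Write}_{q',i',i,\sigma'}$ while several configuration players oscillate, or where a control player's One/Zero flip interacts badly with the transition player's N.N.\ delays, producing a short cycle that never touches a start-of-round state. This requires carefully checking that the constant delays $20 < 40 < 60 < 80$ on the N.N.\ resources (together with the $+1$ penalties on shared $\alpha$-resources) create a strict "gradient" that orders the transition player's strategies Done $\prec$ Verify $\prec$ Write $\prec$ Read $\prec$ Wait along the round, so the only way the transition player can strictly improve is to advance the round, and symmetrically that each configuration/control player's single flip is forced and non-reversible until the next round releases it. Once this monotonicity is nailed down, the rest is bookkeeping.
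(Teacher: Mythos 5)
Your proposal follows essentially the same route as the paper's (very terse) proof: first show that shared $\beta$-resources with delay $M$ are never entered by an improving move and can always be vacated by an improvement sequence, then that once no player pays $M$ the dynamics are forced along the round of Figure~\ref{figure:round} back to a start-of-round profile, so every sink equilibrium, being closed under reachability, contains one. You are in fact more careful than the paper on one point it silently glosses over --- the Halt profile is a singleton sink that does \emph{not} contain a profile of the claimed form, the sole exception to the lemma as literally stated --- and you correctly observe that this exception is harmless for the lemma's use in Theorem~\ref{weighted-qbf}.
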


\begin{proof}
If no player has delay $M$ or greater, the game converges as described in Figure~\ref{figure:round} and eventually reaches a  strategy profile in which the clock player is on Trigger, the main player on Wait and all controll players on their Zero strategy.
Note that no strategy profile with a player having delay $M$ or greater is reachable.
If players have delay of $M$ or greater, there is a sequence of improvement steps such that no player has delay of $M$ or more, e.g. each control or configuration player with delay of $M$ changes to another strategy.
\end{proof}
Thus, every sink equilibrium also contains the strategy profile that corresponds to the initial configuration of $M'$.
Therefore, there is a unique sink equilibrium if and only if $M$ rejects $x$.

\section{Sink Equilibria and Player-Specific Congestion Games}

\begin{theorem}
\label{playerspecific}
{\sc In a Sink} is {\sf PSPACE}-hard for player-specific congestion games.
\end{theorem}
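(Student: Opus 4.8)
The plan is to adapt the reduction used in the proof of Theorem~\ref{theo:weighted} to the setting of player-specific congestion games. The core observation is that in the weighted congestion game construction, the role of the weights is extremely limited: weights are only used to make an $\alpha$- or $\beta$-resource behave like a ``switch'' whose delay jumps from $0$ to a large value exactly when two designated players share it. Since each such resource is shared by at most two players (the transition player together with exactly one configuration or control player), we can replicate the same behaviour in an \emph{unweighted} game by instead giving the two players \emph{individualised} delay functions on that resource. Concretely, for each $\alpha$-resource I would set the delay function to be $0/1$ for the configuration-or-control player that owns the corresponding strategy and $0/1$ for the transition player as well, exactly as before; for each $\beta$-resource I would use the player-specific functions to encode the ``$0$ if alone, $M$ if shared'' rule. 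All other gadget resources (the N.N.\ resources, the clock's constant-delay resource, etc.) already have constant delays and carry over verbatim.

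The key steps, in order, are: (1) recall the TM $M'$ and its space bound $t'$ from the proof of Theorem~\ref{theo:weighted}; (2) build the same player set --- three types of configuration players (state, position, cell$_i$), the transition player, the control players, and the clock player --- with the same strategy sets as in Figures~\ref{figure:configurationplayers}--\ref{figure:transition}; (3) replace every weighted-load test by an equivalent player-specific delay function, checking that the congestion values $n_e(S)\in\{1,2\}$ that actually arise trigger the intended $0$ vs.\ large transition; (4) verify that the ``round'' dynamics of Figure~\ref{figure:round} goes through unchanged, i.e.\ that the sequence of improvement moves is still forced and still simulates one step $c_i \vdash c_{i+1}$ of $M'$; and (5) conclude, exactly as before, that the designated initial-configuration profile $s$ lies in a sink equilibrium if and only if $M'$ runs forever, i.e.\ if and only if $M$ rejects $x$, which gives {\sf PSPACE}-hardness of {\sc In a Sink}. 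I would also note that {\sf PSPACE} membership is immediate since the state graph, though exponentially large, has polynomial-size descriptions of states and edges, so the whole theorem upgrades to {\sf PSPACE}-complete.

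The main obstacle I anticipate is step~(3) combined with step~(4): one must be careful that moving from global weights to player-specific delays does not introduce \emph{new} improving deviations that break the uniqueness of the round sequence. In the weighted construction, a player never wants to move onto a $\beta$-resource already held by someone else because that incurs delay $M$; with player-specific functions I must ensure the large penalty is seen by \emph{both} potential occupants (or at least by whichever player might be tempted to move there), and that no asymmetry in the two players' delay functions creates a profitable move that the weighted version forbade. This is a routine but delicate case-check over the strategies in Figure~\ref{figure:transition}; once it is done, everything else is a transcription of the earlier argument. I expect the write-up to be short, essentially ``the construction and analysis are identical to Theorem~\ref{theo:weighted}, replacing weighted load tests with player-specific delay functions,'' together with the one lemma verifying that the arising congestion levels still produce the intended switching behaviour.
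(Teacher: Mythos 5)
Your overall strategy---turn the weight-dependent load tests into player-specific delay functions---is exactly what the paper does, but you have misdiagnosed \emph{where} in the construction of Theorem~\ref{theo:weighted} the weights actually matter, and the place you explicitly wave off is the one place that needs work. The $\alpha$- and $\beta$-resources are already defined purely in terms of the \emph{number} of allocating players ($0$ if one player, $1$ resp.\ $M$ otherwise), so they need no modification whatsoever when all weights become $1$; there is nothing to ``replace'' there. The only player with non-unit weight in the original construction is the clock player, and the weights are exploited solely on the resources TriggerMain (delay $0/100/100$) and TriggerClock (delay $0/0/20$): the three-valued delay functions use the clock player's larger weight to distinguish \emph{which} of the two players occupies the resource alone. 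These are precisely the resources you dismiss as ``constant delays [that] carry over verbatim,'' which they are not.

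This is not merely cosmetic: if you literally keep $0/100/100$ and $0/0/20$ as functions of the player count with all weights equal to $1$, the round of Figure~\ref{figure:round} breaks at step (8). There the clock player must prefer Wait (constant delay $110$) over Trigger once the transition player reaches Done; in the weighted game the clock player on Trigger then sees $100 + 20 = 120$, but in your unit-weight version it would see $0 + 0 = 0$ and never move, so the simulation stalls. The fix is the one the paper uses: give the clock player weight $1$ and make TriggerMain player-specific (delay $0$ if alone and $100$ if shared for the transition player; always $100$ for the clock player), while TriggerClock gets the common function $0$ if alone and $20$ if shared. Your generic step (3) would have caught this had you applied it to these two resources rather than exempting them; with that correction the rest of your argument (and the {\sf PSPACE} membership remark) goes through as you describe.
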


One can easily replace the clock player in the construction  
which is the only player with non-uniform weight by a player
with weight $1$ and modify the (player-specific) delay functions as follows.
For the transition player the resource TriggerMain has delay $0$ if one player allocates it and delay $100$ otherwise.
For the clock player the resource TriggerMain has always delay $100$.
The delay functions of the resource TriggerClock is identical for both players. It has delay $0$ if one player allocates the resource and delay $20$ for two or more players.
For each strategy profile the delay for each player is identical to the 
delay in the previous example.

\begin{theorem}
\label{playerspecific-qbf}
{\sc Has a non-singleton Sink} is {\sf PSPACE}-hard for player-specific congestion games.
\end{theorem}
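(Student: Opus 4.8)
The plan is to obtain Theorem~\ref{playerspecific-qbf} by combining the player-specific construction that already underlies Theorem~\ref{playerspecific} with the sink-uniqueness argument used for Theorem~\ref{weighted-qbf}. Recall that the passage from the weighted game $G_{M'}$ of Theorem~\ref{theo:weighted} to its player-specific version changes nothing except that the clock player (the unique player with non-unit weight) is replaced by a unit-weight player, and the delays of the resources TriggerMain and TriggerClock are made player-specific in such a way that, \emph{for every strategy profile}, each player experiences exactly the same delay as in the weighted game. Since a player's improving deviations depend only on the delays she herself experiences, the player-specific game and the weighted game have the same set of states and literally the same state graph, hence the same sink equilibria. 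So the entire analysis of the weighted construction transfers without change.

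First I would restate, for the player-specific $G_{M'}$, the Lemma from the proof of Theorem~\ref{weighted-qbf}: every sink equilibrium of $G_{M'}$ contains a strategy profile in which the clock player is on Trigger, the transition player on Wait, and all control players on their Zero strategy. Because the state graph is unchanged, the proof is verbatim: any profile in which some configuration or control player sees delay $\ge M$ admits an improving move of that player dropping her delay below $M$, so no such profile lies in a sink; and from any profile with all delays $<M$ the dynamics proceeds exactly as in Figure~\ref{figure:round} and reaches the claimed canonical profile. Next, exactly as in the weighted case, from such a canonical profile the round-dynamics of Figure~\ref{figure:round} simulates one step of $M'$, so the sink containing it also contains the profile encoding the initial configuration of $M'$; and since $M'$ started from its initial configuration either halts (and the game reaches the Halt Nash equilibrium) or, after erasing the tape and resetting the head and state, returns to that initial configuration, it follows that $G_{M'}$ has a \emph{unique} sink equilibrium, and this sink is a singleton if and only if $M'$ halts, i.e. if and only if $M$ rejects $x$.

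Consequently $G_{M'}$ possesses a non-singleton sink equilibrium if and only if $M$ accepts $x$. Since the acceptance problem for space-bounded Turing machines is {\sf PSPACE}-complete, this yields a polynomial-time reduction showing that {\sc Has a non-singleton Sink} is {\sf PSPACE}-hard for player-specific congestion games (membership/hardness of the underlying space-bounded problem is as in Theorem~\ref{theo:weighted}).

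I do not expect a genuinely hard step here: the only thing that must be checked carefully is the claim that the player-specific delay assignment for TriggerMain and TriggerClock reproduces, profile by profile and player by player, the delay vector of the weighted game $G_{M'}$. Once that bookkeeping is in place, the state-graph equivalence, the Lemma, and the uniqueness of the sink all transfer mechanically from Section~4, and no new argument is required.
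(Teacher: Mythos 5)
Your proposal is correct and follows essentially the same route as the paper, which proves Theorem~\ref{playerspecific-qbf} by observing that the player-specific construction of Theorem~\ref{playerspecific} reproduces the delay vector of the weighted game profile by profile, and then invoking the sink-uniqueness lemma from the proof of Theorem~\ref{weighted-qbf}. Your write-up simply makes explicit the state-graph equivalence and the transfer of the lemma, which the paper leaves implicit.
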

\begin{proof}
This result follows by the same argument as for Theorem~\ref{weighted-qbf}.
\end{proof}


\section{Sink Equilibria and Anonymous Games}

Next, we consider anonymous games with constant-size strategy
set and show that {\sc in a sink} for this game
is also {\sf PSPACE}-complete.

\begin{theorem}
\label{theo:anon}
{\sc In a Sink} is {\sf PSPACE}-hard for anonymous games with constant-size strategy sets.
\end{theorem}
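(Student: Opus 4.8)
The plan is to reuse the simulation-of-a-space-bounded Turing machine strategy from the proof of Theorem~\ref{theo:weighted}, but to re-encode everything in the anonymous-games model, where a player's payoff may depend only on its own action and on the \emph{multiset} of actions played by the others, and where the per-player strategy set has constant size. The starting point is the same: reduce the space-bounded halting problem to the halting of a machine $M'$ that runs from the empty tape, uses at most $t'$ cells to the right of its head's initial position, and, whenever it would accept, overflow its configuration counter, or exceed the space bound, erases the tape and restarts; so that $M'$ loops forever precisely when $M$ rejects $x$. As before we want a game and a designated profile $s$ that lies in a non-singleton sink equilibrium iff $M'$ never halts.

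The first and central difficulty is that in the weighted-congestion construction each resource is shared by \emph{two specific named players} (the transition player and one particular configuration/control player), and the whole gadget machinery relies on being able to address an individual opponent. Anonymity forbids this. The fix I would pursue is to give each logical ``role'' (the state role, the position role, each cell$_i$ role, each control role, the transition role, the clock role) its own \emph{dedicated action-label block} in the common constant-size action set, so that from the multiset of others' actions a player can read off exactly which role is currently playing which sub-action --- this works because there is at most one player per role, so the count of a given label is always $0$ or $1$ and ``how many opponents play label $\ell$'' encodes ``does the unique owner of role-label-$\ell$ currently play it.'' The subtlety is keeping the action set size constant: the number of cell roles is $t'+1$, which is not constant, so the cell players cannot each get private labels. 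I would instead handle the tape with cell players that all share one small label block, and let the position of the head (carried by the position role, which \emph{can} have private labels since there is only one such player, but whose value ranges over $t'$ options --- also not constant!). This forces a genuinely different encoding: represent the head position and the tape contents in \emph{unary via counts of anonymous ``marker'' players}, i.e.\ introduce $\Theta(t')$ identical bit-players whose only job is to be counted, so that ``$k$ markers of type $\textsf{pos}$ are raised'' means the head is at cell $k$, and similarly blocks of markers encode the symbol in each cell. This is exactly the ``unique features'' the introduction alludes to: counts, not identities, carry the configuration.

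With that encoding fixed, the second batch of work is to rebuild the round of Figure~\ref{figure:round} --- Read, Write, Verify, Done, and the clock handshake --- as anonymous best-response chains. Each step ``player $P$ deviates because some named player is/isn't on a given strategy'' becomes ``player $P$'s best response flips because the \emph{count} of a certain action label crossed a threshold''; the delay-$0$/delay-$M$ resources of the congestion construction become payoff functions that are flat until a count hits a threshold and then drop by a large amount, and the $\alpha$-resources (delay $0/1$) become small tie-breaking perturbations that force the deviations to happen in the prescribed order. I would keep the clock player and the control players essentially as before (they are $O(1)$ many, each with $O(1)$ actions), and re-derive the analogue of the Lemma after Theorem~\ref{weighted-qbf}: from any profile with some player ``in the high-penalty region'' there is an improving move out of it, and once everyone is in the low region the dynamics is the forced round, hence every sink contains the canonical profile $s$ encoding the initial configuration of $M'$; so $s$ is in a sink, and the sink is non-singleton, exactly when $M'$ loops, i.e.\ when $M$ rejects $x$. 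PSPACE-membership is routine (guess-and-walk in polynomial space on the state graph of a succinct game). The main obstacle I expect is precisely the constant-action-set constraint clashing with the $t'$ tape cells and $t'$ head positions: making the unary/marker encoding of a Turing configuration support a \emph{single} head move per round, with anonymous players only reading thresholds on counts, is where the real design effort --- and the departure from the earlier proofs --- lies.
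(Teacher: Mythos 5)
Your high-level plan matches the paper's: simulate the looping machine $M'$ from Theorem~\ref{theo:weighted}, encode the state and the head position \emph{in unary as counts of identical two-action players} (the paper uses $|Q|$ identical state players and $t'$ identical position players, with $|\text{state}^1|=j$ meaning ``$M'$ is in state $q_j$'' and $|\text{position}^1|=p$ meaning ``the head is at cell $p$''), and enforce a deterministic round of count-triggered best responses sequenced by a constant number of control players. That much is the right departure from the congestion-game proof, and you correctly diagnose why named resources cannot be carried over.

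However, there is a genuine gap: you never specify the mechanism by which the game \emph{reads and rewrites the tape cell under the head}, and you explicitly defer it as ``where the real design effort lies.'' This is the crux of the theorem, not a routine detail. With a constant-size action set, all $t'+1$ cell players must share the same handful of actions (cell$^0$, cell$^1$, cell$^b$), so anonymity makes it impossible to see \emph{which} cell player holds which symbol from the action counts alone; your suggestion of ``blocks of markers encoding the symbol in each cell'' would either blow up the action set or leave the cells indistinguishable. The paper's resolution is a specific differencing gadget: the cell players are \emph{non-identical} (player cell$_i$'s payoff is conditioned on $|\text{position}^1|=i$, which is how the unique cell under the head gets activated despite sharing labels with the others); a bank of $t'$ tape players first copies the aggregate counts of cell$^0$, cell$^1$, cell$^b$; the activated cell player then moves to a fourth action ``change''; and a dedicated symbol player reads off the scanned symbol as the coordinate in which the cell counts now fall short of the mirrored tape counts (see Figures~\ref{fig:players-anon}--\ref{fig:step-anon} and Appendix~\ref{appendix-anon}). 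Without this (or an equivalent) gadget your reduction does not go through, so the proposal as written is an outline of the correct strategy rather than a proof.
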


We give a reduction from the halting problem of a space bounded Turing machine $M'$ as defined in the proof of Theorem~\ref{theo:weighted}. Additionally, we assume that states of $M'$ are denoted by $q´_0,\ldots,q_m$ where $q_m$ is the halting state. We construct an anonymous game with a constant number of strategies. Each player has a set of ({\em allowed}) strategies. Every strategy that is not allowed always has utility $0$. The only other utility values in the game are $1$ and $2$. 
Given a strategy profile $s=(s_1,\ldots,s_k)$, let $|s_i|$ denotes the number of players that play strategy $s_i$.

The game consists of the three types of {\em configuration players} and five types of {\em auxiliary players} and two {\em control players}. 
The strategy choices of the  configuration players can be mapped to configurations of the TM $M'$. Every sequence of improvement steps can be partitioned into rounds. Each round simulates one step of $M'$. 
At the end of a  round $i$, let $c_i$ be the configuration obtained from the strategy profile of the configuration players.
For every sequence of improvement steps, $c_1  \vdash c_2  \vdash c_3  \vdash \ldots$ equals the run of $M'$ starting from $c_1$.

We first describe the configuration players before we describe the remaining players and the process that simulates one step of $M'$.
The first type of configuration players are $|Q|$ identical state players that choose between the two actions state$^1$ and state$^0$. For $j = |\text{state}^1|$ corresponds to $M'$ being in state $q_j$.
The second type are $t'$ identical position players that choose between the two actions position$^1$ and position$^0$. For $p = |\text{position}^1|$ corresponds to the head of $M'$ being in  position $p$. 
The third type are the cell players cell$_0,\ldots,$cell$_{t'}$ which choose between the actions cell$^0$, cell$^1$, cell$^b$, and change. Unlike the previous two types of players, the cell players are non-identical, i.e., each player has a different payoff function.
For each $1 \le i \le t'$, player cell$_i$ on action cell$^0$ (cell$^1$ or cell$^b$) corresponds to the fact that tape cell $i$ contains the symbol $0$ ($1$ or blank).

\begin{figure}[ht]
\begin{center}
\begin{tabular}{|l|l|}
\hline
Players & allowed strategies \\
\hline
\hline
cell$_1,\ldots,\text{cell}_{t'}$   & cell$^0$,cell$^1$, cell$^b$, change \\
\hline
position$_1,\ldots,\text{position}_{t'}$& position$^1$, position$^0$ \\
\hline
state$_1,\ldots,\text{state}_m$   & state$^1$, state$^0$\\
\hline
tape$_1,\ldots,\text{tape}_{t'}$    & tape$^0$, tape$^1$, tape$^b$\\
\hline
symbol      & symbol$^0$,symbol$^1$,symbol$^b$ \\
\hline
new-sym     &  new-sym$^0$,new-sym$^1$, new-sym$^b$ \\
\hline
new-pos$_1,\ldots,\text{new-pos}_{t'}$ & new-pos$^1$, new-pos$^0$ \\
\hline
new-state$_1,\ldots,\text{new-state}_m$  & new-state$^1$, new-state$^0$\\
\hline
transition1 & init, tape-change, eval-tape, new-sym, new-sym2, new-pos,\\
            & new-pos2, new-state, new-state2, halt\\
\hline
transition2 & Xinit, Xtape-change, Xeval-tape, Xfnew-sym, Xnew-sym2, \\
			      &  Xnew-pos Xnew-pos2, Xnew-state, Xnew-state2  \\
\hline
\end{tabular}
\end{center}
\caption{Players and their strategies
\label{fig:players-anon}
}
\end{figure}

There are five types of {\em auxiliary players} and two {\em control players}. All players and their allowed strategies are listed in Figure~\ref{fig:players-anon}. The utility functions for each player are described in Appendix~\ref{appendix-anon}.
The players $tape_1,\ldots,tape_{t'}$ have identical payoff functions. They are used to evaluate symbol at the current position. The player symbol saves this symbol. The players new-sym, new-pos$_1,\ldots,\text{new-pos}_{t'}$, new-state$_1,\ldots,\text{new-state}_m$ calculate the changes to the configuration. 
The control players ensure that strategy changes happen in a certain order that corresponds to one step.


\begin{lemma} 
Let $c$ be a configuration of $M'$ and $c'$ the successor configuration.
Every sequence of improvement steps from a strategy profile in which the configuration players play corresponding to $c$ and the first control player is on init, reaches a strategy profile in which the configuration players play corresponding to $c'$ and the first control player is on init.
\end{lemma}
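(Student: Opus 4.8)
The plan is to prove the lemma the same way the corresponding statement is proved for weighted congestion games in Theorem~\ref{theo:weighted}: by showing that from the designated profile the improvement dynamics is \emph{forced} to run through one canonical \emph{round}. The round is governed by the two control players \textsf{transition1} and \textsf{transition2}, whose allowed strategies (Figure~\ref{fig:players-anon}) are linearly ordered lists \textsf{init}, \textsf{tape-change}, \textsf{eval-tape}, \textsf{new-sym}, \textsf{new-sym2}, \textsf{new-pos}, \textsf{new-pos2}, \textsf{new-state}, \textsf{new-state2}, \textsf{halt} (and the corresponding X-strategies). I would first spell out, in the spirit of Figure~\ref{figure:round}, the canonical sequence of improvement steps that (a) copies the tape symbols around the head into the \textsf{tape} players and latches the scanned symbol $\sigma$ in the \textsf{symbol} player, (b) has \textsf{transition1} advance through \textsf{eval-tape} so that the \textsf{new-sym}, \textsf{new-pos}$_\cdot$, \textsf{new-state}$_\cdot$ players take the values dictated by $\delta(q,\sigma)=(q',\sigma',d)$, (c) uses \textsf{transition2} on its X-strategies together with the \textsf{change} action of the cell players to force cell$_p$, the \textsf{position} players and the \textsf{state} players to adopt the new values, and (d) resets the auxiliary players and both control players back to \textsf{init}. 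The main claim is then: along \emph{every} improvement sequence starting from the profile that encodes $c$ with \textsf{transition1} on \textsf{init}, this canonical round is carried out (up to harmless reordering of simultaneously available moves), ending exactly at the profile that encodes the successor configuration $c'$ with \textsf{transition1} on \textsf{init}.

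The proof of the main claim proceeds in four steps. (i) \emph{Base step:} in the profile encoding $c$ with \textsf{transition1} on \textsf{init}, I would check against the utility tables of Appendix~\ref{appendix-anon} that the only player with an improving deviation is \textsf{transition1}, and that its best response is to move to \textsf{tape-change} (here we use that every non-allowed strategy has utility $0$, and that the only other values are $1$ and $2$, with \textsf{transition1} receiving $2$ on \textsf{tape-change} precisely because the configuration players are consistent). (ii) \emph{Inductive step over phases:} assuming the current profile is the canonical one for a given phase, I would enumerate all improving deviations and show they are exactly those that advance the phase — e.g.\ in the tape-reading phase the \textsf{tape} players are successively driven to match the cell symbols, then the \textsf{symbol} player latches; in the writing phases the \textsf{new-}$\cdot$ players update the count-encoded values and \textsf{transition2} forces the relevant configuration players to follow; in the reset phase the auxiliary and control players return to \textsf{init}. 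The utilities are designed so that a player earns $2$ only when its strategy "agrees" with the count demanded by the current phase, so it is \emph{strictly} forced to move, and simultaneously no out-of-phase player has any improving move. (iii) \emph{Confluence:} when several configuration players must update — e.g.\ two identical \textsf{position} players flipping to change $|\textsf{position}^1|$ by one, or several \textsf{tape} players — these moves pairwise commute and lead to the same canonical phase-successor profile, which is what validates the quantifier "every sequence of improvement steps"; I would also argue each phase sub-task terminates (finitely many \textsf{tape}/\textsf{position}/\textsf{state} players), so there is no infinite sub-loop that stalls a round. (iv) \emph{Conclusion:} when \textsf{transition1} reaches \textsf{init} again, $|\textsf{state}^1|$ encodes $q'$, $|\textsf{position}^1|$ encodes $p+d$, and cell$_p$ encodes $\sigma'$, i.e.\ the configuration players encode $c'$.

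The step I expect to be the real obstacle is (ii), the "no spurious move" verification, and this is exactly the feature that makes the anonymous-games construction different from the congestion-game ones: because a player's utility depends only on the profile of counts $|s_i|$, the gadget cannot privately address an identical player, so one must check that no count vector reachable along the canonical path admits a profitable deviation other than the intended next one — in particular that the identical \textsf{state}, \textsf{position}, and \textsf{tape} players cannot drift their counts in an unintended direction, overshoot, or oscillate. This is a finite but delicate case analysis over the utility tables of Appendix~\ref{appendix-anon}, and I expect it to be the bulk of the argument. Two minor points complete the picture: first, if $c$ is already a halting configuration, \textsf{transition1}'s list terminates at \textsf{halt}, which is a Nash equilibrium, so the lemma is read vacuously (there is no successor $c'$); second, the complementary fact that no profile with a control or auxiliary player "out of phase" is reachable from an \textsf{init} profile — needed to pin down the sink equilibria — is not part of this lemma and would be established separately when proving the corresponding {\sc In a Sink} and {\sc Has a non-singleton Sink} theorems for anonymous games.
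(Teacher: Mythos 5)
Your proposal takes essentially the same route as the paper: the paper's proof also just exhibits the canonical round (the table of Figure~\ref{fig:step-anon}), argues that the two control players gate each phase so that the intermediate moves of the configuration and auxiliary players are improving exactly when the first control player is on the corresponding strategy, and asserts that each successor profile is ``essentially unique.'' Your phase-by-phase induction plus confluence argument is a slightly more explicit organization of the same verification, and you correctly identify the count-based (anonymity) bookkeeping as the delicate part.
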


\begin{proof}
We now describe this sequence of improvement steps which we call a round. It is listed in Figure~\ref{fig:step-anon} in detail. One can easily check for each of the strategy profiles that the next one is essentially unique.

In a round, the first control player successively changes through his strategies (c.f. steps (2),(4),...). The second control player follows his choices in his corresponding strategies. By construction of the payoff function, this ensures that the control players only change their strategies in a certain order.
Each of these steps of the first control player is interrupted by improvement steps of subsets of configuration or auxiliary players.
The utility functions (cf. Figure~\ref{fig:util-anon}) are designed in such a way that these improvement steps are possible if and only if the control player plays the corresponding strategy. Additionally, the control player may only continue with his next step after these other player have changed their strategies (cf. Figure~\ref{fig:util-anon-control1}) . 

We now describe the improvement steps of the configuration and auxiliary players only.
Consider any strategy profile of the configuration players and assume the first control player is on init (strategy profile (1) in Figure~\ref{fig:step-anon} in Appendix~\ref{appendix-anon}.
The $t'$ tape players change to a strategy profile in that the number of players on tape$^0$, tape$^1$, and tape$^b$ equals the number of players on cell$^0$, cell$^1$, and cell$^b$ (2). 
The player cell$_i$ with $i = |$position$^i|$ changes to his strategy to change (4).
The symbol player changes to symbol$^0$, symbol$^1$, or symbol$^b$ depending on which strategy was left by the player cell$_i$ (6). This can be coded into the utility function by evaluating the difference of number of players in the cell and tape strategies.
The player new-symbol changes to the strategy new-symbol$^{\sigma'}$ where $\sigma'$ corresponds to the new symbol (8). This can be coded as a function as from number of players on symbol$^0$, symbol$^1$, symbol$^b$, and state$^1$. 
The player cell$_i$ changes to the strategy cell$^{\sigma'}$ (10).
Exactly $i'$ players choose new-pos$^1$ where $i'$ is the new position of $M'$ (12).
The players position change their strategies such that $|$position$^1| = |$new-pos$^1| = i'$ (14).
Exactly $q'$ players new-state choose new-state$^1$ where $q_{q'}$ is the new state of $M'$ (16).
The players state change their strategies such that $|$state$^1| = |$new-state$^1| = q'$ (18).
The configuration players' strategy profile now corresponds to the new configuration after one step of $M'$. 
\end{proof}

\begin{theorem}
\label{weighted-exist}
{\sc Has a non-singleton Sink} is {\sf PSPACE}-hard for anonymous games.
\end{theorem}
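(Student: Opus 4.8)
The plan is to reuse the construction from the proof of Theorem~\ref{theo:anon} (the anonymous game $G_{M'}$ simulating the space-bounded Turing machine $M'$) and to pair it with an argument analogous to the Lemma underlying Theorem~\ref{weighted-qbf}, so that the constructed game has a non-singleton sink equilibrium if and only if $M'$ never halts. Since the halting problem for the space-bounded machine $M'$ is PSPACE-hard, this yields PSPACE-hardness of {\sc Has a non-singleton Sink} for anonymous games. So the first step is to fix the instance: given $(M,x,t)$ build $M'$ exactly as in Theorem~\ref{theo:weighted}, then build the anonymous game $G_{M'}$ from Theorem~\ref{theo:anon}.

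The core of the argument is a structural lemma of the same flavor as the one used for weighted congestion games: I want to show that every sink equilibrium of $G_{M'}$ must contain a ``canonical'' strategy profile in which the first control player is on \emph{init} (and the second control player and all auxiliary/configuration players are in the matching canonical positions that begin a round). The proof of this is the same two-part reasoning as before. First, from the canonical profile the round dynamics of Figure~\ref{fig:step-anon} is essentially forced and deterministic, so any best-response walk returns to a canonical profile, now encoding the successor configuration of $M'$; hence from any canonical profile the walk simulates the run of $M'$ forever (unless $M'$ halts). Second, from any non-canonical profile there is always an improvement walk that reaches a canonical profile — because any player who is ``out of sync'' with the control players has an improving deviation back toward the canonical configuration (the utility functions give value $0$ to disallowed/out-of-order strategies, so such players can always improve). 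Combining these, every strongly connected sink must hit a canonical profile, and from a canonical profile the reachable canonical profiles are exactly those encoding configurations in the forward orbit of the current one under $M'$.

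From this lemma the theorem follows exactly as in the weighted-congestion case. Consider the canonical profile $s^\ast$ whose configuration players encode the initial configuration of $M'$ (empty tape, head at the start, state $q_0$). If $M'$ halts, then from $s^\ast$ the run reaches the halting state $q_m$; at that point the \emph{halt} strategy of transition1 becomes available and is an improving deviation leading to a single-state sink (a pure Nash equilibrium), so the orbit of $s^\ast$ under best responses terminates and the only sinks are singletons — in fact one checks there is no non-singleton sink at all, since every sink must contain $s^\ast$ by the lemma and $s^\ast$ leads to the halting equilibrium. Conversely, if $M'$ never halts, the run from the initial configuration is infinite, so the sink equilibrium containing $s^\ast$ cycles through infinitely many distinct canonical profiles and is therefore non-singleton. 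Thus $G_{M'}$ has a non-singleton sink equilibrium iff $M'$ does not halt iff $M$ rejects $x$, giving the reduction.

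The main obstacle is verifying the ``from any non-canonical profile a canonical profile is reachable'' half of the structural lemma, because unlike the weighted-congestion construction — where the uniform ``delay $M$'' penalty makes it immediate that a bad profile is escapable and unreachable — here one must argue over the specific anonymous-game utility functions (Figures~\ref{fig:util-anon} and~\ref{fig:util-anon-control1} in the appendix) that every intermediate ``stuck'' profile admits an improving move toward resynchronization, and that the control players cannot advance until the auxiliary players have caught up. This requires a careful but routine case analysis over which control-player strategy is active and which subset of configuration/auxiliary players is mismatched; the key is that disallowed and out-of-turn strategies yield utility $0$, so a mismatched player always strictly prefers the move the round prescribes. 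I would also need to double-check the boundary behavior of $M'$ (counter overflow, tape-bound violation, acceptance) so that in the non-halting case the dynamics genuinely never exits the infinite orbit.
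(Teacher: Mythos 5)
Your overall reduction is the right one (reuse $G_{M'}$ from Theorem~\ref{theo:anon}; show a non-singleton sink exists iff $M'$ never halts), and you correctly identify the crux: every non-singleton sink must contain a profile with control1 on \emph{init}. But your proof of that structural lemma has a genuine gap, and it is precisely the part that carries the theorem. First, the lemma as you state it (``every sink equilibrium contains a canonical profile'') is false when $M'$ halts: the halting pure Nash equilibrium is a singleton sink containing no \emph{init}-profile, and your later sentence ``every sink must contain $s^\ast$'' cannot be right for the same reason. Second, and more importantly, your proposed justification --- ``from any non-canonical profile an improvement walk reaches a canonical profile, because any out-of-sync player has an improving deviation'' --- does not follow from the construction. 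In this anonymous game an allowed strategy has baseline utility $1$ and a player has an improving move only when some strategy currently yields utility $2$ under the conditions of Figures~\ref{fig:util-anon} and~\ref{fig:util-anon-control1}; being ``out of sync'' therefore does not by itself give a player any improving deviation (unlike the weighted-congestion construction, where a delay of $M$ always makes escape improving). You flag this case analysis as the ``main obstacle'' but do not supply an argument that would close it, and a pure reachability claim of this kind is the wrong tool anyway, since it fails at pure Nash equilibria.

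The paper's proof avoids reachability entirely and argues about \emph{infinite improvement sequences} (which every non-singleton sink contains): (i) the strategy changes of control1 are forced to occur in the cyclic order of Table~\ref{fig:step-anon}, so if control1 moves infinitely often the sequence passes through \emph{init}; (ii) if control1 moves only finitely often, then after its last move its strategy is fixed, and with control1 fixed the utility functions of the remaining players admit no infinite improvement sequence --- a termination argument, not a reachability one. Observation (ii) is the missing idea in your proposal; without it you have no way to rule out a non-singleton sink that cycles forever while never advancing control1 to \emph{init}. The rest of your argument (halting $\Rightarrow$ every would-be non-singleton sink has an escape to the halting equilibrium; non-halting $\Rightarrow$ the forced round dynamics from the initial canonical profile forms a non-singleton sink) matches the paper once the lemma is restated for non-singleton sinks and proved as above.
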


\begin{proof}
By construction of $M'$ and the proof of Theorem~\ref{theo:anon}, it suffices to show that every infinite sequence of improvement steps contains a strategy profile with player control1 on init, i.e. a profile listed in the first row of Table~\ref{fig:step-anon}.

The strategy changes of control1 have to occur in the same order as listed in Table~\ref{fig:step-anon}. Therefore, every sequence with infinite strategy changes of control1 contains a profile with control1 on init. We, therefore, show that there is no
infinite sequence that contains no strategy change of control1. 
Thus, fix any strategy choice for player control1. Observe that the utility functions of the remaining players (cf. Figure~\ref{fig:util-anon}) do not allow an infinite sequence. 
\end{proof}
\section{Sink Equilibria in other Games}

\begin{theorem}
\label{theo:market-in}
{\sc In a Sink} is {\sf PSPACE}-hard for (i) 
uniform utility-based two-sided market games,
(ii) many-to-one two-sided market games, and 
(iii) valid-utility games.
\end{theorem}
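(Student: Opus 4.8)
The plan is to mimic the strategy used in the proofs of Theorems~\ref{theo:weighted} and~\ref{theo:market}: reduce from the space-bounded halting problem by building, for each of the three classes of games, a game whose state graph simulates the Turing machine $M'$ constructed in the proof of Theorem~\ref{theo:weighted}, and then designate a distinguished strategy profile $s$ (the one encoding the initial configuration of $M'$) that lies in a non-singleton sink equilibrium if and only if $M'$ never halts. Since $M'$ is built so that from \emph{every} total configuration it runs forever exactly when $M$ rejects $x$, this gives {\sf PSPACE}-hardness; membership in {\sf PSPACE} is routine, since one can guess and verify a reachable cycle or a path escaping any purported sink using polynomial space (the state graph has exponentially many vertices but polynomial-size descriptions, and reachability in such implicitly-presented graphs is in {\sf PSPACE}). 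I would handle the three cases in the order (iii) valid-utility games, (ii) many-to-one two-sided markets, (i) uniform utility-based two-sided markets, since the valid-utility construction is closest in spirit to the congestion-game gadgetry already in hand, and the two market cases can then be obtained by re-encoding the same configuration/transition/control/clock players into the two-sided-market language, reusing the clause-gadget ideas from the proof of Theorem~\ref{theo:market}.

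For valid-utility games the main step is to realize the ``$\alpha$/$\beta$-resource'' mechanism of the weighted congestion game within the submodular-social-function framework. Each configuration player, control player, transition player, and clock player becomes a player whose ground set $V_i$ encodes its possible strategies; I would make the social function $\gamma$ a (weighted) coverage-type function on these market elements, which is automatically submodular and non-decreasing, and choose the weights so that a player's Vickrey-style utility (the marginal contribution to $\gamma$) reproduces exactly the same best-response structure — in particular the property that the transition player's moves are forced by the configuration/control players, while the transition player on $\alpha$-elements forces the configuration/control players. The delays $0/1$ and $0/M$ of the congestion game translate into ``large'' versus ``tiny'' marginal values, and the clock player's role is reproduced by one high-value market element that gates each round. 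Once this is set up, the round of Figure~\ref{figure:round} carries over essentially verbatim, and the analogue of the Lemma following Theorem~\ref{weighted-qbf} — that every sink equilibrium must contain the canonical ``clock on Trigger, transition on Wait, all control players on Zero'' profile — holds by the same argument: any profile with a player in the ``bad'' (low-utility) configuration can escape, and no bad profile is reachable, so every sink contains the canonical profile and hence the initial-configuration profile.

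For the two market cases I would port the same players into the $G(\mathcal X,\mathcal Y)$ model: active agents play the configuration/control/transition/clock roles, passive agents (the $a,b,c,r$-type markets, as in the proof of Theorem~\ref{theo:market}) with appropriately chosen utilities and preference lists enforce which active agent ``wins'' which market, and the lower-ideal strategy families are just the down-closures of the intended Zero/One (or Read/Write/Verify/Done/Wait/Halt) strategies. The winner-determination rule plays the role of the congestion counts $n_e(S)$, and by choosing the preference lists so that the transition agent is always beaten on its $\beta$-markets but always wins its $\alpha$-markets (when it should), the forced-move structure is recovered; the uniform utility-based variant is obtained by the same normalization trick used for Theorem~\ref{theo:market}(i). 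The expected main obstacle is the valid-utility encoding: one must check simultaneously that (a) $\gamma$ is genuinely submodular and non-decreasing, (b) the utilities dominate the marginal social contribution, (c) $\sum_i u_i(S)\le\gamma(S)$ for every profile, and (d) the induced best-response dynamics still factors into the clean rounds of Figure~\ref{figure:round} with no spurious improving deviations — reconciling the game-theoretic constraints of the valid-utility class with the rigid combinatorial behaviour needed to simulate $M'$ is the delicate part, whereas the market reductions, once the valid-utility gadget is in place, are largely a matter of translation.
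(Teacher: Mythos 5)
Your proposal is correct and rests on the same core reduction as the paper's: re-encode the weighted-congestion-game simulation of the space-bounded Turing machine from Theorem~\ref{theo:weighted} so that the Nash dynamics of the new game is isomorphic to that of the congestion game, with the $\alpha$- and $\beta$-resources becoming markets (or ground-set elements) whose utilities and preference lists reproduce the forced round structure of Figure~\ref{figure:round}. The difference is in the decomposition. You build the valid-utility game first, from scratch, via a weighted coverage-type social function, and then translate into the two market models; you correctly identify verifying the valid-utility axioms as the delicate step. The paper goes the other way (Appendix~\ref{app:market}): it constructs only the uniform utility-based two-sided market game, assigning each congestion-game resource a market with utility $M$, $1$, or a large private normalizing value, together with a two-element preference list deciding which player wins it (Figure~\ref{transition_markets}); cases (ii) and (iii) then come essentially for free, since a uniform (additive) utility-based market game is a special case of a many-to-one market game and is itself a valid-utility game with the total matched utility as the submodular, non-decreasing social function. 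The paper's route buys exactly what you flag as the main obstacle: there is no need to check submodularity, the marginal-contribution bound, or the sum condition for a bespoke coverage function, because the additive market game satisfies them automatically. One point your sketch passes over, which the paper's construction makes explicit, is the sign flip between cost minimization and utility maximization: each strategy of the transition player needs a private normalizing market (the N.N.\ entries with utility $N-(|Q|+t'+|\Gamma|-1)M+\text{const}$) so that losing a contested $\beta$-market is as bad for the transition player as incurring delay $M$ was in the congestion game; any completed write-up along your lines would need the analogous normalization.
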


\begin{theorem}
\label{markets-exist}
{\sc Has a non-singleton Sink} is {\sf PSPACE}-hard for (i) uniform utility-based two-sided market games,
(ii) many-to-one two-sided market games, and 
(iii) valid-utility games.
\end{theorem}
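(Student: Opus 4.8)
# Proof Proposal for Theorem~\ref{markets-exist}

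The plan is to extend the reduction used for Theorem~\ref{theo:market-in} (which establishes that {\sc In a Sink} is {\sf PSPACE}-hard for these three classes of games) in exactly the same way that Theorem~\ref{weighted-qbf} extends Theorem~\ref{theo:weighted}. Since all three games—uniform utility-based two-sided market games, many-to-one two-sided market games, and valid-utility games—are reached via a simulation of the space-bounded halting problem for the Turing machine $M'$, the key structural fact to re-establish in each case is that the constructed game has a \emph{unique} sink equilibrium precisely when $M$ rejects $x$. The argument then concludes: there is a non-singleton sink equilibrium if and only if there is more than one sink equilibrium or the unique sink is not a single Nash equilibrium, which—by construction of $M'$, whose runs from every total configuration either loop forever or reset to $q_0$—happens if and only if $M'$ runs forever, i.e., if and only if $M$ rejects $x$.

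First I would recall the reset property of $M'$: it uses only tape cells to the right of its start, bounded by $t'$, and on acceptance, counter overflow, or exceeding the tape bound $t$ it erases the tape and returns to $q_0$. Consequently, from \emph{any} configuration the simulation eventually revisits the initial configuration $c_0$ unless it halts. The second step is to prove the analogue of the Lemma following Theorem~\ref{weighted-qbf}: every sink equilibrium of the constructed game contains the ``canonical'' strategy profile corresponding to $c_0$ with the control/clock gadgetry in its reset position. For each of the three game classes this requires checking that (a) from any profile where some control or clock gadget is ``mid-round'' there is an improvement walk returning the gadgets to the reset position without getting trapped, and (b) once in a reset profile, the unique improvement walk is the round described in the {\sc In a Sink} proof, driving the configuration players through one step of $M'$. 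Since the round structure and gadget behavior are inherited verbatim from the Theorem~\ref{theo:market-in} construction, this is a routine adaptation of the weighted-congestion-game Lemma.

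Given that Lemma, every sink equilibrium contains the canonical profile for $c_0$; hence there is exactly one sink equilibrium, and it is a singleton (a pure Nash equilibrium, namely the ``Halt'' configuration) if and only if $M'$ halts when started from $c_0$, i.e., if and only if $M$ accepts $x$. Therefore the game has a non-singleton sink equilibrium if and only if $M$ rejects $x$, completing the reduction. The main obstacle I anticipate is verifying step (a) above for the two-sided market and valid-utility constructions: one must confirm that no ``deadlock'' profile is reachable in which every player is individually at a best response yet the profile does not correspond to the halting configuration—i.e., that the only pure Nash equilibrium of the game is the halting one, and that every other strongly connected terminal component contains the canonical profile. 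This amounts to a careful case analysis of the utility functions of the clause/control gadgets specific to each model, analogous to the delay-$M$ argument in the congestion-game Lemma, but phrased in terms of the preference lists and submodular social functions respectively.
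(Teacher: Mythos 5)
Your proposal matches the paper's approach: the paper proves this theorem by noting that the market/valid-utility constructions from the {\sc In a Sink} reduction have Nash dynamics isomorphic to those of the weighted congestion game, so the argument of Theorem~\ref{weighted-qbf} (every sink equilibrium contains the canonical reset profile, hence the sink equilibrium is unique and is non-singleton exactly when $M'$ runs forever) transfers verbatim. Your plan to re-verify the key lemma in each model is exactly the content the paper compresses into the claim of isomorphic dynamics, so the proposal is correct and essentially identical in structure.
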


\noindent The proof is a rework of the proof for Theorem~\ref{theo:weighted}
and is shifted to Appendix~\ref{app:market}.
The Nash dynamics of the uniform utility-based two-sided market game that we describe there is isomorphic to the Nash dynamics of the congestion game in the proof for Theorem~\ref{theo:weighted}.


\pagebreak
\begin{appendix}

\section{Proof of Theorem~\ref{theo:market-in}}
\label{app:market}


The Nash dynamics of the uniform utility-based two-sided market game that we describe here is isomorphic to the Nash dynamics of the congestion game in the proof for Theorem~\ref{theo:weighted}. Thus, all properties easily transfer. The strategies of the transition player and the preferences of the markets can be found in Figure~\ref{transition_markets}. The strategies of the remaining players can be obtained from the previous proof.

\begin{figure}[ht]

\begin{tabular}[ht]{|l|l|l|l|}
 \hline
  Strategy   & Markets & Utilities (Preference) \\
  \hline
	Wait &  $\beta^1_{W,q',i',i,\sigma'}$ for all $q',i',i,\sigma'$  &$ M$ (Control$_{W,q',i',i,\sigma'}$, transition player) \\

	    & $ \beta^1_V{q',i',i,\sigma'}$ for all $q',i',i,\sigma'$  &$ M$ (Control$_{V,q',i',i,\sigma'}$, transition player) \\ 
	           
	&   $\alpha^1_{D}  $ & $ 1$ (transition player, Control$_D$) \\	
	& TriggerMain & $ 100$ (clock player, transition player) \\

	\hline
	$Read_{q,i,\sigma}$& $\beta^p$ for all $p \in Q \setminus q$  & $ M$ (state player, transition player)\\
for each $q \in Q$,  &	$\beta^j$ for all $j \ne i$  & $ M$ (position player, transition player)  \\
$0 \le i \le t'$ and $\sigma \in \Gamma$ &   $\beta^{\sigma'}_i$ for all $\sigma' \in \Gamma \setminus \sigma$  & $ M$  (cell$_i$, transition player) \\
	                   &   $\beta^1_{D}  $ & $ M$ (Control$_D$, transition player)\\
	                   & $\alpha^0_{W,q',i',i,\sigma'}$ with $\delta(q,\sigma)=(q',\sigma',d)$    & $ 1$ (transition player,Control$_{W,q',i',i,\sigma'}$) \\
	                   & and $i' = i +d$ &\\
	                   & N.N.                                     & $N-(|Q| +t' + | \Gamma| -1)M + 20 $\\
  \hline
	$Write_{q',i',i,\sigma'}$&
 $\alpha^p$ for all $p \in Q' \setminus q'$ & $1$ (transition player, state player)\\
for each $q' \in Q$, $0 \le i \le t'$,  &     $\alpha^j$ for all $j \ne i'$ & $ 1$ (transition player, position player)\\
$i' \in \{i-1,i,i+1\}$, &  $\alpha^{\sigma}_i$ for all $\sigma \in \Gamma \setminus \sigma'$  & $ 1$ (transition player, cell$_i$\\
 and $\sigma' \in \Gamma$  &    $\alpha^0_{V,q',i',i,\sigma'} $              & $ 1$ (transition player, Control$_{V,q',i',i,\sigma'}$)\\
                  &  $\beta^0_{W,q',i',i,\sigma'}$ & $ M$ (Control$_{W,q',i',i,\sigma'}$,transition player) \\
                  & N.N.                     	& $N-M+40$ \\
                      
   \hline
  $Verify_{q',i',i,\sigma'}$ & $\beta^p$ for all $p \in Q \setminus q'$  & $ M$ (state player, transition player)  \\
for each $q' \in Q$, $0 \le i \le t'$, &	$\beta^j$ for all $j \ne i'$  & $ M$ (position player, transition player)\\
$i' \in \{i-1,i,i+1\}$, &   $\beta^{\sigma}_i$ for all $\sigma \in \Gamma \setminus \sigma'$  & $ M$ (cell$_i$, transition player) \\ 
 and $\sigma' \in \Gamma$  & $\beta^0_{V,q',i',i,\sigma'}$  & $ M$ (Control$_{V,q',i',i,\sigma'}$,transition player)  \\
                    	& $\alpha^0_D$ & $ 1$ (transition player, Control$_D$)\\
                      &	N.N. & $N-(|Q|+t' + |\Gamma|-1)M + 60$  \\
                 
	\hline
	Done  & triggerClock& $  80$ (transition player, clock player)\\
	      & $\beta^0_D$  &  $ M$(Control$_D$, transition player)   \\
	     & $\alpha^1_{W,q',i',i,\sigma'}$ for all $q',i',i,\sigma'$  &$ 1$ (transition player, Control$_{W,q',i',i,\sigma'}$)  \\
	  & $\alpha^1_{V,q',i',i,\sigma'}$ for all $q',i',i,\sigma'$  &$ 1$ (transition player, Control$_{V,q',i',i,\sigma'}$)  \\   
	     &	N.N. & $N-M+20$  \\  
	\hline
	Halt & $ \beta^q$ for all $q \in Q \setminus q_h$ & $ M$ (state player, transition player)  \\
	     & N.N.                                      & $N-M$ \\

	\hline
\end{tabular}
\caption{Strategies of the transition players. 
Markets denoted by N.N. are used by the transition players only.
Let $N = |Q|(t+1)6|\Gamma|M$}.
\label{transition_markets}
\end{figure}

\clearpage

\section{Details of the proof of Theorem~\ref{theo:anon}}
\label{appendix-anon}

\begin{sidewaystable}[ht]
\begin{tabular}{|r|l|c|c|c|c|c|c|c|}
\hline
 & Configuration players & tape & symbol & new-sym & new-pos & new-state & control1 & control2 \\
\hline

1 & $(\sigma_1,\ldots,\sigma_{i-1},\sigma_{i},\sigma_{i+1}\ldots \sigma_{t'}),q,i$& \underline{*} & * & * & * & * & init & \underline{*} \\

2 &$(\sigma_1,\ldots,\sigma_{i-1},\sigma_{i},\sigma_{i+1}\ldots \sigma_{t'}),q,i$&$p_0,p_1,p_b$ & * & * & * & * & \underline{init} & Xinit \\

3 &$(\sigma_1,\ldots,\sigma_{i-1},\underline{\sigma_{i}},\sigma_{i+1}\ldots \sigma_{t'}),q,i$&$p_0,p_1,p_b$ & * & * & * & * & tape-change & \underline{Xinit} \\

4 &$(\sigma_1,\ldots,\sigma_{i-1},\text{change},\sigma_{i+1}\ldots \sigma_{t'}),q,i$&$p_0,p_1,p_b$ & $*$ & * & * & * & \underline{tape-change} & Xtape-change \\

5 &$(\sigma_1,\ldots,\sigma_{i-1},\text{change},\sigma_{i+1}\ldots \sigma_{t'}),q,i$&$p_0,p_1,p_b$ & $\underline{*}$ & * & * & * &eval-tape &  \underline{Xtape-change} \\

6 &$(\sigma_1,\ldots,\sigma_{i-1},\text{change},\sigma_{i+1}\ldots \sigma_{t'}),q,i$&$p_0,p_1,p_b$ & $\sigma_i$ & * & * & * & \underline{eval-tape} & Xeval-tape \\

7 &$(\sigma_1,\ldots,\sigma_{i-1},\text{change},\sigma_{i+1}\ldots \sigma_{t'}),q,i$&$p_0,p_1,p_b$ & $\sigma_i$ & \underline{*} & * & * & new-sym &  \underline{Xeval-tape}\\

8 &$(\sigma_1,\ldots,\sigma_{i-1},\text{change},\sigma_{i+1}\ldots \sigma_{t'}),q,i$&$p_0,p_1,p_b$ & $\sigma_i$ & $\sigma'$ & * & * & \underline{new-sym} & Xnew-sym\\

9 &$(\sigma_1,\ldots,\sigma_{i-1},\underline{\text{change}},\sigma_{i+1}\ldots \sigma_{t'}),q,i$&$p_0,p_1,p_b$ & $\sigma_i$ & $\sigma'$ & * & * & new-sym2 &  \underline{Xnew-sym}\\

10 &$(\sigma_1,\ldots,\sigma_{i-1},\sigma',\sigma_{i+1}\ldots \sigma_{t'}),q,i$&$p_0,p_1,p_b$ & $\sigma_i$ & $\sigma'$ & * & * & \underline{new-sym2} & Xnew-sym2\\

11 &$(\sigma_1,\ldots,\sigma_{i-1},\sigma',\sigma_{i+1}\ldots \sigma_{t'}),q,i$&$p_0,p_1,p_b$ & $\sigma_i$ & $\sigma'$ & \underline{*} & * & new-pos 
& \underline{Xnew-sym}\\

12 &$(\sigma_1,\ldots,\sigma_{i-1},\sigma',\sigma_{i+1}\ldots \sigma_{t'}),q,i$&$p_0,p_1,p_b$ & $\sigma_i$ & $\sigma'$ & $i'$ & * & \underline{new-pos} & Xnew-pos\\

13 &$(\sigma_1,\ldots,\sigma_{i-1},\sigma',\sigma_{i+1}\ldots \sigma_{t'}),q,\underline{i}$&$p_0,p_1,p_b$ & $\sigma_i$ & $\sigma'$ & $i'$ & * & new-pos2 &  \underline{Xnew-pos}\\

14 &$(\sigma_1,\ldots,\sigma_{i-1},\sigma',\sigma_{i+1}\ldots \sigma_{t'}),q,i'$&$p_0,p_1,p_b$ & $\sigma_i$ & $\sigma'$ & $i'$ & * & \underline{new-pos2} & Xnew-pos2\\

15 &$(\sigma_1,\ldots,\sigma_{i-1},\sigma',\sigma_{i+1}\ldots \sigma_{t'}),q,i'$&$p_0,p_1,p_b$ & $\sigma_i$ & $\sigma'$ & $i'$ & \underline{*} & new-state& \underline{Xnew-pos2}\\

16 &$(\sigma_1,\ldots,\sigma_{i-1},\sigma',\sigma_{i+1}\ldots \sigma_{t'}),q,i'$&$p_0,p_1,p_b$ & $\sigma_i$ & $\sigma'$ & $i'$ & $q'$ & \underline{new-state} & Xnew-state\\

17 &$(\sigma_1,\ldots,\sigma_{i-1},\sigma',\sigma_{i+1}\ldots \sigma_{t'}),\underline{q},i'$&$p_0,p_1,p_b$ & $\sigma_i$ & $\sigma'$ & $i'$ & $q'$ & new-state2 &  \underline{Xnew-state}\\

18 &$(\sigma_1,\ldots,\sigma_{i-1},\sigma',\sigma_{i+1}\ldots \sigma_{t'}),q',i'$&$p_0,p_1,p_b$ & $\sigma_i$ & $\sigma'$ & $i'$ & $q'$ & \underline{new-state2} & Xnew-state2\\

19 &$(\sigma_1,\ldots,\sigma_{i-1},\sigma',\sigma_{i+1}\ldots \sigma_{t'}),q',i'$&$\underline{p_0,p_1,p_b}$ & $\sigma_i$ & $\sigma'$ & $i'$ & $q'$ & init & \underline{Xnew-state2} \\
\hline
\end{tabular}
\caption{This figure shows the sequence of strategy profiles during one round. A strategy profile is described es follows. The strategy profile of the cell players is given as a vector $\sigma \in \{0,1,b,change\}^{t'}$ where $\sigma_i$ denotes strategy cell$^{\sigma_i}$  for player cell$_i$. For the state, position, new-pos, new-state players, we give the number of players on state$^1$, position$^1$, new-pos$^1$, and new-state$^1$, respectively. The strategy profile of the tape players  is described by a vector $p \in \{0,\ldots,t'\}^3$ that denotes the number of players on tape$^0$,tape$^1$, and tape$^b$, respectively. For the players symbol and new-sym, $\sigma$ denotes strategy symbol$^\sigma$ and new-sym$^\sigma$, respectively. 
The round starts with each player  cell$_i$ on $\sigma_i \in \{0,1,b\}$ , $q$ state players on state$^1$, $i$ position players on position$^1$ and the first control player on init. 
The underlined strategies indicate the players that have an incentive to deviate from their current strategies.
\label{fig:step-anon}
}
\end{sidewaystable}

\begin{figure}[h!]
\begin{tabular}{|l|l|l|}
\hline
Player &  strategy & partitions with utility $2$\\
\hline
cell$_i$ &change   &    $|$tape-change$| \ne 0$ and $|$position$^1| = i$ \\
         &cell$^0$ &    $|$new-tape$| \ne 0$ and $|$new-sym$^0| > 0$\\
         &cell$^1$ &    $|$new-tape$| \ne 0$ and $|$new-sym$^1| > 0$\\
         &cell$^b$ &    $|$new-tape$| \ne 0$ and $|$new-sym$^b| > 0$\\
\hline
tape$_i$& tape$^0$ & $|$init$| \ne 0$ and $|$cell$^0| > |$tape$^0|$\\ 
        & tape$^1$ & $|$init$| \ne 0$ and $|$cell$^1| > |$tape$^1|$\\
        & tape$^b$ & $|$init$| \ne 0$ and $|$cell$^b| > |$tape$^b|$\\ 
\hline
symbol        & symbol$^0$  & $|$eval-tape$| \ne 0$ and $|$cell$^0|$-$|$tape$^0| < 0$ \\
       			  & symbol$^1$  & $|$eval-tape$| \ne 0$ and $|$cell$^1|$-$|$tape$^1| < 0$ \\
              & symbol$^b$  & $|$eval-tape$| \ne 0$ and $|$cell$^b|$-$|$tape$^b| < 0$ \\
\hline
new-sym       &  new-sym$^0$       & $|$new-symbol$| \ne 0$ and if $0$ is {\em new symbol} \\
              &  new-sym$^1$       & $|$new-symbol$| \ne 0$ and if $1$ is {\em new symbol} \\
              &  new-sym$^b$       & $|$new-symbol$| \ne 0$ and if $b$ is {\em new symbol} \\
\hline
new-pos      & new-pos$^1$ & $|$new-pos$| \ne 0$ and $|$new-pos$^1| < $ {\em new position}\\
             & new-pos$^0$ & $|$new-pos$| \ne 0$ and $|$new-pos$^1| > $ {\em new position}\\
\hline
new-state    & new-state$^1$ & $|$new-state$| \ne 0$ and $|$new-state$^1| >$ {\em new state}\\
             & new-state$^0$ & $|$new-state$| \ne 0$ and $|$new-state$^1| <$ {\em new state}\\
\hline
position     & position$^1$ & $|$new-pos2$| \ne 0$ and $|$position$^1| < |$new-pos$^1|$\\
             & position$^0$ & $|$new-pos2$| \ne 0$ and $|$position$^1| > |$new-pos$^1|$\\
\hline  
state        & state$^1$ & $|$new-state2$| \ne 0$ and $|$state$^1| < |$new-state$^1|$\\
             & state$^0$ & $|$new-state2$| \ne 0$ and $|$state$^1| > |$new-state$^1|$\\
\hline  
halt	       & $|$state$^1| = m$\\
\hline
\end{tabular}
\caption{The strategy partition combinations are listed that induce utility $2$.  Note that the {\em new symbol, new position}, and {\em new state} can be coded as a function of $|$symbol$^0|$,$|$symbol$^1|$,$|$symbol$^b|$, and $|$state$^1|$. 
\label{fig:util-anon}}
\end{figure}

\begin{figure}[h!]
\begin{tabular}{|l|l|l|}
\hline
 strategy & partitions with utility $2$\\
\hline
 tape-change & $|$Xinit$| > 0$ and  $|$cell$^0| = |$tape$^0|$ and $|$cell$^1| = |$tape$^1|$\\
             &  and  $|$cell$^b| = |$tape$^b|$\\
\hline
eval-tape    & $|$Xtape-change$| > 0$ and $|$cell-change$| = 1$\\ 
\hline
new-sym    & $|$Xeval-tape$| > 0$ and  $|$cell$^0|+|$symbol$^0| = |$tape$^0|$ and $|$cell$^1|+|$symbol$^1| = |$tape$^1|$ \\
 						 & and  $|$cell$^b|+|$symbol$^b| = |$tape$^b|$\\
\hline
new-sym2   & $|$Xnew-sym$| > 0$ and $|$new-sym$^{\sigma'}| = 1$ for $\sigma' =$ {\em new symbol} \\
\hline
new-pos      & $|$Xnew-sym2$| > 0$ and $|$change$| = 0$\\
\hline

new-pos2     & $|$Xnew-pos$| > 0$ and $|$new-pos$^1| = $ {\em new position}\\
\hline

new-state    & $|$Xnew-pos2$| > 0$ and d $|$position$^1| = |$new-pos$^1|$\\
\hline

new-state2   & $|$Xnew-state$| > 0$ and  $|$new-state$^1| = $ {\em new state}\\
\hline

init         & $|$Xnew-state2$| > 0$ and $|$state$^1| = |$new-state$^1|$\\
\hline
stop         & $|$position$^1| = m $ \\
\hline
\end{tabular}
\caption{The strategy/partition combinations of the first control player are listed that induce utility of $2$. 
\label{fig:util-anon-control1}}
\end{figure}

\begin{figure}[h!]
\begin{tabular}{|l|l|l|}
\hline
 strategy & partitions with utility $2$\\
\hline
\hline

Xinit         & $|$init$| > 0$\\
\hline
Xtape-change         & $|$tape-change$| > 0$\\
\hline
Xeval-tape         & $|$eval-tape$| > 0$\\
\hline
Xnew-sym         & $|$new-sym$| > 0$\\
\hline
Xnew-sym2         & $|$new-sym2$| > 0$\\
\hline
Xnew-pos         & $|$new-pos$| > 0$\\
\hline
Xnew-pos2         & $|$new-pos2$| > 0$\\
\hline
Xnew-state         & $|$new-state$| > 0$\\
\hline
Xnew-state2         & $|$new-state2$| > 0$\\
\hline
\end{tabular}
\caption{The strategy/partition combinations of the second control player that induce utility of $2$. 
\label{fig:util-anon-control2}}
\end{figure}

\end{appendix}


\begin{thebibliography}{10}

\bibitem{ARV06a}
H.~Ackermann, H.~R\"oglin, and B.~V\"ocking.
\newblock Pure {N}ash equilibria in player-specific and weighted congestion
  games.
\newblock In {\em Proceedings of the 2nd International Workshop on Internet and
  Network Economics (WINE)}, pages 50--61, 2006.

\bibitem{AS07}
Heiner Ackermann and Alexander Skopalik.
\newblock On the complexity of pure {N}ash equilibria in player-specific
  network congestion games.
\newblock In {\em In Proceedings of 3nd International Workshop on Internet and
  Network Economics (WINE)}, pages 419--430, 2007.

\bibitem{BFH07}
Felix Brandt, Felix~A. Fischer, and Markus Holzer.
\newblock Symmetries and the complexity of pure nash equilibrium.
\newblock In Wolfgang Thomas and Pascal Weil, editors, {\em STACS}, volume 4393
  of {\em Lecture Notes in Computer Science}, pages 212--223. Springer, 2007.

\bibitem{CD06}
Xi~Chen and Xiaotie Deng.
\newblock Settling the complexity of two-player nash equilibrium.
\newblock In {\em FOCS}, pages 261--272, 2006.

\bibitem{CDT06}
Xi~Chen, Xiaotie Deng, and Shang-Hua Teng.
\newblock Computing nash equilibria: Approximation and smoothed complexity.
\newblock In {\em FOCS}, pages 603--612, 2006.

\bibitem{DP07}
C.~Daskalakis and C.~H. Papadimitriou.
\newblock Computing equilibria in anonymous games.
\newblock In {\em {IEEE} Symposium on Foundations of Computer Science}, 2007.

\bibitem{DP07b}
C.~Daskalakis and C.~H. Papadimitriou.
\newblock On the exhaustive algorithm for nash equilibria.
\newblock page Unpublished Manuscript, 2007.

\bibitem{DGP06}
Constantinos Daskalakis, Paul~W. Goldberg, and Christos~H. Papadimitriou.
\newblock The complexity of computing a nash equilibrium.
\newblock In {\em STOC}, pages 71--78, 2006.

\bibitem{DS06}
Juliane Dunkel and Andreas~S. Schulz.
\newblock On the complexity of pure-strategy {N}ash equilibria in congestion
  and local-effect games.
\newblock In {\em In Proceedings of 2nd International Workshop on Internet and
  Network Economics (WINE)}, pages 62--73, 2006.

\bibitem{FPT04}
A.~Fabrikant, C.~Papadimitriou, and K.~Talwar.
\newblock On the complexity of pure equilibria.
\newblock In {\em Proceedings of the 36th Annual ACM Symposium on Theory of
  Computing (STOC)}, pages 604--612, 2004.

\bibitem{FP08}
Alex Fabrikant and Christos~H. Papadimitriou.
\newblock The complexity of game dynamics: Bgp oscillations, sink equilibria,
  and beyond.
\newblock In {\em SODA '08: Proceedings of the nineteenth annual ACM-SIAM
  symposium on Discrete algorithms}, pages 844--853, Philadelphia, PA, USA,
  2008. Society for Industrial and Applied Mathematics.

\bibitem{FGMS06}
L.~Fleischer, M.~Goemans, V.~S. Mirrokni, and M.~Sviridenko.
\newblock Tight approximation algorithms for maximum general assignment
  problems.
\newblock In {\em Proceedings of the 16th Annual ACM--SIAM Symposium on
  Discrete Algorithms (SODA)}, pages 611--620, 2006.

\bibitem{GS62}
D.~Gale and L.~Shapley.
\newblock College admissions and the stability of marriage.
\newblock {\em American Mathematical Monthly}, 69:9--15, 1962.

\bibitem{GLMT04}
M.~Goemans, L.~Li, V.~S. Mirrokni, and M.~Thottan.
\newblock Market sharing games applied to content distribution in ad-hoc
  networks.
\newblock In {\em Proceedings of the 5th ACM International Symposium on Mobile
  Ad Hoc Networking and Computing (MobiHoc)}, pages 1020--1033, 2004.

\bibitem{GMV05}
M.~Goemans, V.~S. Mirrokni, and A.~Vetta.
\newblock Sink equilibria and convergence.
\newblock In {\em FOCS}, 2005.

\bibitem{JL07}
Albert~Xin Jiang and Kevin Leyton-Brown.
\newblock Computing pure nash equilibria in symmetric {Action-Graph Games}.
\newblock In {\em Association for the Advancement of Artificial Intelligence
  (AAAI)}, pages 79--85, 2007.

\bibitem{KU06}
F.~Kojima and {\"{U}}.~Unver.
\newblock Random paths to pairwise stability in many-to-many matching problems:
  a study on market equilibration.
\newblock {\em International Journal of Game Theory}, 2006.

\bibitem{M96}
I.~Milchtaich.
\newblock Congestion games with player-specific payoff functions.
\newblock {\em Games and Economics Behavior}, 13:111--124, 1996.

\bibitem{R73}
R.~W. Rosenthal.
\newblock {A class of games possessing pure-strategy {N}ash equilibria}.
\newblock {\em International Journal of Game Theory}, 2:65--67, 1973.

\bibitem{RV90}
A.~E. Roth and J.~H.~Vande Vate.
\newblock Random paths to stability in two-sided matching.
\newblock {\em Econometrica}, 58(6):1475--1480, 1990.

\bibitem{SV07}
Alexander Skopalik and Berthold V\"{o}cking.
\newblock Inapproximability of pure {N}ash equilibria.
\newblock In {\em STOC '08: Proceedings of the 40th annual ACM symposium on
  Theory of computing}, pages 355--364, New York, NY, USA, 2008. ACM.

\bibitem{V02}
A.~Vetta.
\newblock {N}ash equilibria in competitive societies, with applications to
  facility location, traffic routing and auctions.
\newblock In {\em 43rd Symp. on Foundations of Computer Science (FOCS)}, pages
  416--425, 2002.

\end{thebibliography}
\end{document}